\newif\ifdraft \drafttrue \draftfalse
\newif\ifincludeappendix \includeappendixtrue 
\tikzset{algpxIndentLine/.style={dotted}}
\newcounter{lastalgoline}
\newenvironment{vmalgorithm}%
{\begin{algorithmic}[1]\setcounter{ALG@line}{\value{lastalgoline}}}
{\setcounter{lastalgoline}{\value{ALG@line}}\end{algorithmic}}
\algrenewcommand\algorithmicindent{1.25em}
\algrenewcommand\alglinenumber[1]{\makebox[1em][r]{\footnotesize #1.}}
\algrenewcommand\algorithmicdo{}
\algrenewcommand\algorithmicprocedure{\textbf{\texttt{type}}}
\algrenewcommand\algorithmicthen{}
\algrenewcommand\algorithmicforall{\textbf{for each}}
\newlength{\thmspace}\setlength{\thmspace}{2.5ex plus 0.3ex}
\newtheoremstyle{vmstyle}%
    {\thmspace} 
    {\thmspace} 
    {\slshape\renewcommand{\emph}[1]{\textbf{#1}}} 
    {} 
    {\bfseries} 
    {.} 
    {2ex} 
    {} 
\theoremstyle{acmplain}
\newcommand{\thmBlockFont}[1]{#1}
\newcounter{thm}
\newtheorem{corollary}[thm]{\thmBlockFont{Corollary}}
\newtheorem{definition}[thm]{\thmBlockFont{Definition}}
\newtheorem{example}[thm]{\thmBlockFont{Example}}
\newtheorem{lemma}[thm]{\thmBlockFont{Lemma}}
\newtheorem{remark}[thm]{\thmBlockFont{Remark}}
\newtheorem{theorem}[thm]{\thmBlockFont{Theorem}}
\newtheorem{falsepropositionX}{\thmBlockFont{Proposition}}
\newtheorem{falsetheoremX}{\thmBlockFont{Theorem}}
\newtheorem{falsecorollaryX}{\thmBlockFont{Corollary}}
\newtheorem{falselemmaX}{\thmBlockFont{Lemma}}
\newtheorem*{falsestatementX}{\thmBlockFont{\thestatement}}
\newenvironment{falsestatement}[1]{\def\thestatement{#1}\begin{falsestatementX}}{\end{falsestatementX}\let\thestatement\relax}
\tikzset{
    annotation edge/.style={-,draw, dashed, rounded corners, shorten <=3pt, color=black!75},
    annotation/.style={draw,dashed, rectangle, rounded corners, inner sep=5pt, text=black,},
    llist/.style={draw,-stealth, in=0,out=0,looseness=4, thick},
    point/.style = {fill=black,minimum width=.5ex,circle,inner sep=0pt},
    point2/.style = {draw=fred,minimum width=2ex,circle,inner sep=0pt},
    vertex/.style = {draw, rounded rectangle},
    state/.style = {draw, circle, minimum width=5mm},
    label/.style = {rectangle, draw, fill = white, dashed, text = black},
    index/.style = {text = black},
    edge/.style = {draw,-stealth, shorten >=2pt},
    enum edge/.style = {draw,-stealth, thick, rounded corners, shorten >=5pt},
    transition/.style = {draw,-stealth, shorten >=2pt},
    high/.style = {edge,color = fblue},
    suspicious/.style = {edge, color=fred},
    high suspicious/.style = {edge, color = fpurple},
    abs label dir/.code args={#1}{\def\abslabeldirection{#1}},
    abs label sep/.code args={#1}{\def\abslabelsep{#1}},
    abs label/.style args={#1}{postaction={decorate, decoration={markings, mark=at position \abslabelsep with {\path node[anchor=\pgfdecoratedangle+\abslabeldirection] {#1};}}}},
    abs left label/.style args={#1}{absolute label dir=-90,absolute label={#1}},
    abs right label/.style args={#1}{absolute label dir=90,absolute label={#1}},
    vm loop/.style={transition, pos=.5, looseness = 9},
    north west loop/.style={vm loop, in={\the\numexpr 135 + \loopangle\relax}, 
                                     out ={\the\numexpr 135 - \loopangle\relax}},
    north east loop/.style={vm loop, in={\the\numexpr 45 + \loopangle\relax}, 
                                     out ={\the\numexpr 45 - \loopangle\relax}},
    south west loop/.style={vm loop, in={\the\numexpr -135 + \loopangle\relax}, 
                                     out ={\the\numexpr -135 - \loopangle\relax}},
    south east loop/.style={vm loop, in={\the\numexpr -45 + \loopangle\relax}, 
                                     out ={\the\numexpr -45 - \loopangle\relax}},
    north loop/.style={vm loop, in={\the\numexpr 90 + \loopangle\relax}, 
                                out ={\the\numexpr 90 - \loopangle\relax}},
    south loop/.style={vm loop, in={\the\numexpr 270 - \loopangle\relax}, 
                                out ={\the\numexpr 270 + \loopangle\relax}},
    east loop/.style={vm loop, in={\the\numexpr 0 + \loopangle\relax}, 
                                out ={\the\numexpr 0 - \loopangle\relax}},
    west loop/.style={vm loop, in={\the\numexpr 180 - \loopangle\relax}, 
                                out ={\the\numexpr 180 + \loopangle\relax}},     
}
\def\abslabeldirection{90}
\def\abslabelsep{5pt}
\def\loopangle{24}
\definecolor{fblue}{rgb}{0.,0.,0.75}%
\definecolor{fred}{rgb}{0.85, 0.1, 0.1}%
\definecolor{fpurple}{rgb}{0.6,0.2,0.4}%
\newcommand{\thetitle}{Distinct Shortest Walk Enumeration for RPQs}
\newcommand{\thesubtitle}{}
\title{\thetitle\ifdefempty{\thesubtitle}{}{\\[.25em] \Large \thesubtitle}}
\author{Claire David}
\affiliation{%
  \institution{LIGM, Univ. Gustave Eiffel, CNRS}
  \streetaddress{Bâtiment Copernic, 5 boulevard Descartes}
  \postcode{77420}
  \city{Champs-sur-Marne}
  \country{France}
}
\email{claire.david@univ-eiffel.fr}
\author{Nadime Francis}
\affiliation{%
  \institution{LIGM, Univ. Gustave Eiffel, CNRS}
  \streetaddress{Bâtiment Copernic, 5 boulevard Descartes}
  \postcode{77420}
  \city{Champs-sur-Marne}
  \country{France}
}
\email{nadime.francis@univ-eiffel.fr}
\author{Victor Marsault}
\affiliation{%
  \institution{LIGM, Univ. Gustave Eiffel, CNRS}
  \streetaddress{Bâtiment Copernic, 5 boulevard Descartes}
  \postcode{77420}
  \city{Champs-sur-Marne}
  \country{France}
}
\email{victor.marsault@univ-eiffel.fr}
\begin{abstract}
    We consider the Distinct Shortest Walks problem. Given two vertices $s$ and $t$ of a graph database $\Dc$ and a regular path query, we want to enumerate all walks of minimal length from $s$ to $t$ that carry a label that conforms to the query.

    Usual theoretical solutions turn out to be inefficient when applied to graph models that are closer to real-life systems, in particular because edges may carry multiple labels. Indeed, known algorithms may repeat the same answer exponentially many times.

    We propose an efficient algorithm for graph databases with multiple labels. The preprocessing runs in $\bigo{\card{\Dc}\times\card{\Ac}}$ and the delay between two consecutive outputs is in $\bigo{\lambda\times\card{\Ac}}$, where $\Ac$ is a nondeterministic automaton representing the query and $\lambda$ is the minimal length. The algorithm can handle $\varepsilon$-transitions in $\Ac$ or queries given as regular expressions at no additional cost.
\end{abstract}
\begin{document}

\maketitle

\begin{figure}[h]
    \resizebox{\linewidth}{!}{
    \begin{tikzpicture}[remember picture,-stealth]
    \node[vertex] (A) at (0,0) {$Alix$};
    \node[vertex] (B) at (9,0) {$Bob$};
    \node[vertex] (C) at (4.5,1.5) {$Cassie$};
    \node[vertex] (D) at (3,-1.5) {$Dan$};
    \node[vertex] (E) at (6,-1.5) {$Eve$};

    \draw (A) edge[in = 180, high] node[label] {$e_1: h$}   (C);
    \draw (A) edge[out = -45, in = 180, high suspicious] node[label] {$e_2:h,s$} (D);
    
    \draw (C) edge[high,out = 0] node[label] {$e_7:h$}  (B);
    
    \draw (D) edge[high] node[label] {$e_4:h$}  (E);
    \draw (C) edge[out = -90, in = 150, high] node[label,pos=0.6] {$e_5:h$}  (E);
    \draw (C) edge[out = -30, in = 90, suspicious] node[label, pos=0.35] {$e_6:s\vphantom{h}$}  (E);

    \draw (D) edge[out = 90, in = 210, suspicious] node[label] {$e_3:s\vphantom{h}$}  (C);

    \draw (E) edge[out = 0, in = 225, high suspicious] 
        node[label] {$e_8:h,s$} (B);
\end{tikzpicture}
    }
    \caption{A multi-edge multi-labeled graph database}
    \label{fig:bank}
\end{figure}

\section{Introduction}

Regular Path Queries (RPQs, \cite{CruzMendelzonWood1987, AnglesArenas+2017}) are the building block of most query languages over graph databases. Formally, an RPQ is defined by a regular expression~$R$ and is said to \emph{match} any walk in the database which carries a label that conforms to $R$. During query
processing, one fundamental task consists in producing all matching walks of minimal length that start and end at given vertices. In particular, this task is at the heart of the \emph{all-shortest-walks semantics}, one of the most widespread semantics in practice. For instance, it is the semantics of GSQL~\cite{GSQL3.9,DeutschXu+2019} (TigerGraph),
and of the theoretical language G-Core~\cite{AnglesArenas+2018}.
All-shortest-walks semantics is also supported by PGQL~\cite{PGQL2.0} (Oracle), and by GQL~\cite{GQL-ISO,FrancisGheerbrant+2023}. The latter is particularly relevant, as it has been designed to become the standard query language for graph databases.

In theoretical settings, this task is often considered to pose little challenge. Indeed, graph databases are usually abstracted as single-labeled graphs $\Dc$ (that is, graphs whose edges carry exactly one label from a finite alphabet), while queries are given as a deterministic finite automaton $\Ac$. In that case, finding shortest matching walks can be done as follows: construct the product graph $\Dc \times \Ac$, identify vertices that correspond to initial and final states of the automaton, discard all labels, and then simply run any well-known 
algorithm for finding shortest paths in an unlabeled graph.

However, this approach does not apply to real-life scenarios. 
Queries are typically given by the user as a regular expression, which does not translate to a deterministic automaton without a possible exponential increase in size. More importantly, real-life systems allow edges to carry multiple labels, either natively (as in GQL), or as a theoretical abstraction of boolean tests on data values. These two features lead to \emph{nondeterminism} both \emph{in the query} and \emph{in the data}. 
Thus, our goal is to efficiently solve the problem below.

\begin{problem}{Distinct Shortest Walks}
    \item[Inputs:] A multi-labeled multi-edge database~$\Dc$, and two vertices~$s,t$ in~$\Dc$.
    \item[Query:] A nondeterministic finite automaton~$\Ac$.
    \item[Output:] All shortest walks from $s$ to $t$ that match $\Ac$, without duplicates.
\end{problem}

Remark that this problem asks for a variable amount of outputs. In such settings, the standard approach consists in distinguishing between \emph{preprocessing} (time before the first output) and \emph{delay} (time between two consecutive outputs). This is known as \emph{enumeration complexity}; see \cite{Strozecki2021} for more details.

The main challenge is the handling of duplicates. Indeed, when either the query $\Ac$ or the database $\Dc$ allows nondeterminism, a single walk $w$ of $\Dc$ might correspond to exponentially many walks in $\Dc \times \Ac$. In that case, naively enumerating shortest walks in $\Dc \times \Ac$ would return an exponential number of copies of~$w$. 
One could ensure that each walk is returned only once by storing all outputted walks. In the worst case, this approach requires exponential space and leads to an exponential delay since the algorithm might find all copies of the same walk before discovering a new one.

Quite surprisingly, this problem has received little attention in the literature.
In \cite{MartensTrautner2018}, Martens and Trautner\footnote{Remark: Trautner is now known as Popp.} use a prior result due to Ackermann and Shallit \cite{AckermanShallit2009} to show that it can be enumerated with polynomial delay. 
The construction is given in more details in \cite{Popp2022}.
Note that although they consider the case of single-labeled databases, their result can be adapted to multi-labeled graph databases at no additional cost.
Proving precise complexity upper bounds
was not the main concern of \cite{MartensTrautner2018,Popp2022},
and \cite{AckermanShallit2009} did not use the enumeration complexity framework. Thus, the resulting algorithm only achieves a polynomial delay bound. 
The technical report \cite{FrancisMarsault2023} translates \cite{AckermanShallit2009} into the enumeration complexity framework, which leads to the following, more precise, statement.

\begin{theorem}[\cite{MartensTrautner2018}, combined with \cite{FrancisMarsault2023}]
\label{t:Martens-Trautner}
Given a nondeterministic automaton $\Ac$ with set of states $Q$ and transition table $\Delta$ and a database $\Dc$ with set of vertices $V$, \problemfont{Distinct Shortest Walks}($\Dc$,$\Ac$) can be enumerated with delay in $\bigo{\card{\Dc}\times\card{\Delta}
\times\lambda}$ after a preprocessing in $\bigo{\card[2]{Q}\times \card[2]{V}\times\lambda+\card{\Delta}\times\card{\Dc}\times\lambda}$, where $\lambda$ is the length of a shortest walk.
\end{theorem}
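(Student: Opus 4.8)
The plan is to reduce \problemfont{Distinct Shortest Walks} to a pruned depth-first traversal of the walks of length $\lambda$ in $\Dc$, using $\Ac$ only to discard dead-end prefixes, so that each walk is visited — and hence emitted — exactly once. Write $q_0$ for the initial state of $\Ac$, $F$ for its final states, and assume $\Ac$ has no $\varepsilon$-transitions (they can be removed beforehand); also write $\delta(S,a)$ for $\set{q'}{(q,a,q')\in\Delta\text{ for some }q\in S}$. First I would form the product graph $\Dc\times\Ac$, with vertices $V\times Q$ and an edge $(u,q)\to(v,q')$ whenever $\Dc$ has an edge $e$ from $u$ to $v$ and $\Delta$ contains $(q,a,q')$ with $a\in\lbl(e)$. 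Matching walks from $s$ to $t$ in $\Dc$ are in many-to-one correspondence with paths from $(s,q_0)$ to $\set{t}{}\times F$ in this product, so a breadth-first search from $(s,q_0)$, costing $\bigo{\card{\Dc}\times\card{\Delta}}$, returns the minimal length $\lambda$ (if no such path exists, the output is empty). I then precompute, for $k=0,\dots,\lambda$, the \emph{completion sets} $B_k\subseteq V\times Q$, where $(v,q)\in B_k$ iff $\set{t}{}\times F$ is reachable from $(v,q)$ by a path of length \emph{exactly} $k$ in the product: take $B_0=\set{t}{}\times F$ and put $(u,q)\in B_{k+1}$ iff some product edge out of $(u,q)$ lands in $B_k$. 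Each layer is one pass over the product, so all the $B_k$ together cost $\bigo{\card{\Dc}\times\card{\Delta}\times\lambda}$, which accounts for the second term of the stated preprocessing bound; the (looser) first term absorbs the cost of the auxiliary reachability structures as analyzed in \cite{AckermanShallit2009,FrancisMarsault2023}.

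The enumeration phase is a DFS over edge sequences $e_1e_2\cdots$ of $\Dc$ issuing from $s$, exploring the children of a node according to a fixed total order on the edges of $\Dc$. At depth $i$ the DFS carries the current vertex $v_i=\tgt(e_i)$ (with $v_0=s$) together with the \emph{macrostate} $S_i=\bigcup_{a\in\lbl(e_i)}\delta(S_{i-1},a)\subseteq Q$ (with $S_0=\set{q_0}{}$), i.e. the set of $\Ac$-states reachable along the label of the current prefix, maintained incrementally. The invariant is that $(v_i,q)\in B_{\lambda-i}$ for some $q\in S_i$. For $i<\lambda$, the DFS tries each outgoing edge $e$ of $v_i$ in order, forms the candidate $(v',S')$, keeps it iff $(v',q)\in B_{\lambda-i-1}$ for some $q\in S'$, and recurses; otherwise it moves to the next edge. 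At depth $\lambda$ the invariant forces $v_\lambda=t$ and $S_\lambda\cap F\neq\emptyset$, so $e_1\cdots e_\lambda$ is a shortest matching walk and is emitted.

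For correctness, any shortest matching walk survives every pruning test: fix an accepting run over it; after $i$ edges it is in some state $q\in S_i$ from which the remaining $\lambda-i$ edges reach $F$, hence $(v_i,q)\in B_{\lambda-i}$, so the prefix is kept. Conversely, distinct walks are distinct edge sequences, hence distinct leaves of the DFS tree, so none is emitted twice, and every emitted sequence is, by the invariant, a shortest matching walk. For the delay, between two consecutive outputs the DFS backtracks up at most $\lambda$ levels and then descends again; since every \emph{kept} child is completable, descending from a kept node always reaches a leaf within $\le\lambda$ further steps, so only $\bigo{\lambda}$ levels are touched, and at each of them at most all outgoing edges of the current vertex are inspected, each at cost $\bigo{\card{\Delta}}$ (update the macrostate, test against the relevant $B_k$). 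Summing over levels gives delay $\bigo{\card{\Dc}\times\card{\Delta}\times\lambda}$, and the time to the first output is of the same order.

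The main obstacle is obtaining polynomial delay while avoiding duplicates: naively enumerating paths of $\Dc\times\Ac$ would report a walk once per accepting run, i.e. possibly exponentially often, hence exponential delay. Threading the \emph{set} $S_i$ — rather than a single $\Ac$-state — through the DFS collapses all runs over a given prefix into one node, which is the key idea taken from \cite{AckermanShallit2009}; what must be argued carefully is that this collapsing stays compatible with forward pruning, namely that ``completable'' is correctly decided for macrostates by intersecting $S_i$ with the fiber of $B_{\lambda-i}$ over $v_i$, and that updating $S_i$ incrementally (rather than recomputing it) keeps the per-node cost at $\bigo{\card{\Delta}}$ instead of $\bigo{\card{\Delta}\times\card{Q}}$.
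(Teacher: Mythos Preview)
Your argument is correct and delivers the stated bounds, but it is not the route the paper takes. The paper's proof is a pure black-box reduction: it builds a single NFA $\Ac'$ over the alphabet $E$ with state set $V\times Q$, initial states $\{s\}\times I$, final states $\{t\}\times F$, and a transition $\big((v_1,q_1),e,(v_2,q_2)\big)$ whenever $\src(e)=v_1$, $\tgt(e)=v_2$ and $(q_1,a,q_2)\in\Delta$ for some $a\in\lbl(e)$. Words of $L(\Ac')$ are then in bijection with matching $s$--$t$ walks, so one simply invokes the known \problemfont{All Shortest Words} bound with $\card{Q'}=\card{V}\,\card{Q}$ and $\card{\Delta'}\in\bigo{\card{\Dc}\,\card{\Delta}}$; the two terms in the stated preprocessing are exactly the two terms of that theorem after substitution, and turning an output word over $E$ back into a walk costs $\bigo{\lambda}$ per answer, which is negligible.

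What you do instead is unroll that black box: your macrostate DFS with backward completion sets $B_k$ is precisely the Ackerman--Shallit cross-section enumeration executed on $\Ac'$, together with the observation that the $V$-component of any reachable subset of $V\times Q$ is always a singleton (an edge determines its target), so the macrostate can live in $2^{Q}$ rather than $2^{V\times Q}$. Both routes are sound; the paper's is shorter since it only constructs $\Ac'$ and cites a theorem, whereas yours must re-derive the delay analysis. Conversely, your direct version actually shows that the $\card[2]{V}\,\card[2]{Q}\,\lambda$ term is unnecessary---your $B_k$ layers cost only $\bigo{\card{\Dc}\times\card{\Delta}\times\lambda}$---so your sentence about that term ``absorbing auxiliary reachability structures'' is vestigial: you never build or use any such structure. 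Two small slips worth fixing: the paper's automata may have several initial states, so take $S_0=I$ rather than $\{q_0\}$; and in the delay analysis, the edge inspections between two outputs are spread over the up-path \emph{and} the down-path (up to $2\lambda$ DFS-tree nodes, not $\lambda$), though this does not affect the asymptotic bound.
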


Other recent articles \cite{MartensNiewerth+2023, FariasRojasVrgoc2023} have been devoted to query evaluation under all-shortest-walks semantics, which once again highlights its significance as a core task in graph data management. However, in both cases, graphs are assumed to be single-labeled and the query is assumed to be given as an unambiguous automaton. These assumptions prevent all form of nondeterminism from the product graph $\Dc \times \Ac$ in the following sense: if a walk $w$ of $\Dc$ matches $\Ac$, there is only one witness of this fact in $\Dc \times \Ac$. In that case, the usual approach produces no duplicates. This leads to an algorithm with $\bigo{\card{\Dc}\times\card{\Ac}}$ preprocessing and $\bigo{\lambda}$ delay. 

\paragraph{Contributions.} We propose an efficient algorithm for \problemfont{Distinct Shortest Walks}. As stated in Theorem~\ref{theorem:main} below, the algorithm yields very satisfactory complexity bounds. In particular, the delay between two consecutive outputs during the enumeration phase \emph{does not depend} on the size of the input database, and the preprocessing phase is only linear in the size of the database.

\begin{theorem}\label{theorem:main}
    \problemfont{Distinct Shortest Walks} can be enumerated with a preprocessing time in $\bigo{\card{\Dc}\times\card{\Ac}}$ and a delay in $\bigo{\lambda\times \card{\Ac}}$, where~$\lambda$ is the length of a shortest walk.
\end{theorem}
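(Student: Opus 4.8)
The plan is to reduce \problemfont{Distinct Shortest Walks} to a lexicographic enumeration of the branches of a tree of walk prefixes that is described implicitly by a layered product structure, to build that structure in linear time, and then to show that moving from one branch to the next costs only $\bigo{\lambda\times\card{\Ac}}$.

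First I would build the product graph $\Dc\times\Ac$: its vertices are the pairs $(v,q)$, and for every edge $e$ of $\Dc$ from $u$ to $v$ and every transition $q\xrightarrow{a}q'$ of $\Ac$ with $a\in\lbl(e)$ it carries an edge from $(u,q)$ to $(v,q')$ tagged by $e$; $\varepsilon$-transitions, and queries given as regular expressions, are absorbed by taking $\varepsilon$-closures and by first passing to a linear-size automaton, neither of which changes the asymptotics. A counting argument over transitions and labels shows $\card{\Dc\times\Ac}=\bigo{\card{\Dc}\times\card{\Ac}}$. Then a BFS from $\setst{(s,q)}{q\in Q_0}$ assigns to every product vertex a level $d^-(v,q)$, a reverse BFS from $\setst{(t,q)}{q\in F}$ assigns a co-level $d^+(v,q)$, and $\lambda=\min_{q\in F}d^-(t,q)$ (if this is infinite there is nothing to output). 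Since $d^-(t,q)$ is the least length of an $s$-to-$t$ walk carrying a run of $\Ac$ ending in $q$, this $\lambda$ is the length of a shortest matching walk. Calling $(v,q)$ \emph{live} when $d^-(v,q)+d^+(v,q)=\lambda$, placing it at layer $d^-(v,q)$, and keeping only the edges that live vertices induce between consecutive layers yields a layered sub-DAG $G$; everything so far costs $\bigo{\card{\Dc}\times\card{\Ac}}$.

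The combinatorial core is a clean description of the outputs. For a walk $w=e_1\cdots e_i$ starting at $s$ and ending at $v_i$, let $R_j(w)$, for $0\le j\le i$, be the set of states of $\Ac$ that are live at layer $j$ (at the vertex reached by $e_1\cdots e_j$) and reachable in $\Ac$ from a state of $R_{j-1}(w)$ by reading some letter of $\lbl(e_j)$, starting from $R_0(w)=Q_0$ restricted to states live at layer $0$; call an edge $e$ out of $v_i$ \emph{legal for $w$} when $R_{i+1}(w\cdot e)\neq\emptyset$. I would then prove the following facts, all by unfolding the BFS and reverse BFS: the walks to be output are exactly the walks $w$ of length $\lambda$ with $R_\lambda(w)\neq\emptyset$; for any such $w$, every $R_j(w)$ is nonempty; and every walk $w$ of length $i<\lambda$ with $R_i(w)\neq\emptyset$ has at least one legal edge, since any $q\in R_i(w)$ is live and hence has a live out-edge in $G$. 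Fixing a total order on the edges of $\Dc$, the outputs are therefore exactly the depth-$\lambda$ leaves of the tree whose nodes are the prefixes $w$ with $R_{|w|}(w)\neq\emptyset$ (ordered, at each node, by the index of the next edge): each output occurs exactly once — distinct walks are distinct edge sequences, hence distinct branches — and no branch dies before depth $\lambda$. Enumeration is then a depth-first traversal of this tree: the first walk is produced by appending, $\lambda$ times, the smallest legal edge, and the successor of $w$ by locating the largest position $j$ at which $w$ can be replaced by a legal edge of larger index and then completing greedily with smallest legal edges.

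The delicate part, and the step I expect to dominate the argument, is the delay: producing the next walk must cost $\bigo{\lambda\times\card{\Ac}}$, i.e.\ $\bigo{\card{\Ac}}$ per position touched, even though a vertex of $\Dc$ may have very large out-degree, so one cannot afford to scan all of its out-edges and recompute $R_{j+1}$ for each. Two ingredients are meant to make this work. First, legality has been "compiled into" $G$: the legal edges out of a configuration $(v_j,R_j)$ are exactly the $\Dc$-tags of the live out-edges of the product vertices $\setst{(v_j,q)}{q\in R_j}$, so the smallest legal edge above a given threshold is obtained by a $\card{R_j}$-way merge of per-product-vertex sorted out-edge lists, costing $\bigo{\card{\Ac}}$. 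Second, the current walk is carried together with the stack of sets $R_0(w),\dots,R_i(w)$ stored as bit-vectors over the state set, so that a single update $R_j\mapsto R_{j+1}$ for the chosen edge costs $\bigo{\card{\Ac}}$, and a successor computation — which alters the walk only from some position $j$ onward and reuses the already-computed prefix of the $R$-stack — visits at most $\lambda$ positions. Verifying that the work charged to each visited position is indeed $\bigo{\card{\Ac}}$, and that all auxiliary structures ($\varepsilon$-closures, the sorted out-edge lists, the first walk) stay within the stated preprocessing and delay budgets, is the careful accounting I expect to be the main technical obstacle.
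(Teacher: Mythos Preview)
Your approach is correct and reaches the stated bounds, but it is the forward dual of the paper's construction rather than the same proof. The paper runs a \emph{single} forward BFS from $(s,I)$ and records, at each pair $(u,p)$, the \emph{predecessor} edges and states on shortest product-walks (the maps $B_u$, trimmed into sorted queues $C_u$); enumeration is then a depth-first traversal of a \emph{backward} suffix tree rooted at $\langle t\rangle$, carrying a state set $S(w)$ that plays exactly the role of your $R_j$. Your $\card{R_j}$-way merge over sorted per-product-vertex out-edge lists is the paper's merge over the queues $C_u[p]$ for $p\in S(w)$, seen from the other end. Two differences are worth flagging. First, you use a forward \emph{and} a reverse BFS to carve out live vertices, whereas the paper's single BFS with predecessor recording already suffices, because backward enumeration from $(t,F)$ along recorded shortest-path predecessors is automatically confined to useful walks. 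Second, the paper keeps one global restartable queue per product vertex rather than per-level cursors, and must therefore prove a no-concurrency lemma (two ancestor/descendant nodes of the enumeration tree at the same database vertex have disjoint state sets, so no queue is touched by two nested calls); your stacked-cursor picture sidesteps this, giving a slightly more elementary correctness argument at the price of $O(\lambda\cdot\card{Q})$ stack space, which is harmless here. One minor slip in your accounting: a successor computation visits up to $2\lambda$ positions (backtrack, then descend), not ``at most $\lambda$''; the bound is unaffected.
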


Our algorithm has an additional $\card{\Ac}$ factor in the delay when compared to the simpler setting, as an extra cost for handling nondeterminism in the query and in the data. Note that it takes linear time to check whether a given automaton $\Ac$ is deterministic and a given database $\Dc$ is single-labeled. Thus, detecting that the input lies in the more favourable setting and running the more efficient algorithm instead can be done at no additional cost.

The main idea behind the algorithm is as follows. The enumeration phase consists in a depth-first traversal of the set of answers, represented as a backward-search tree rooted at the target~$t$.
Since all branches of $\Tc$ have the same length $\lambda$, this ensures the required bound on the delay between two consecutive outputs. Additionally, all walks are represented at most once in $\Tc$, which ensures that no output is returned twice. Ideally, the preprocessing phase would explicitly compute $\Tc$. This, however, can take an exponential time and space in the worst case scenario. To circumvent this issue, the algorithm will instead annotate the database with a limited amount of information that allows recomputing $\Tc$ \emph{on the fly} during the enumeration phase. This annotation makes use of carefully chosen data structures to ensure that recomputing the next edge of $\Tc$ \emph{does not depend} on the size\footnote{It does not even depend on the degree of the current vertex.} of $\Dc$ and is only linear in $\Ac$.

\paragraph{Outline.} Most of the article is devoted to describing the algorithm and proving its properties. After necessary preliminaries, Section~\ref{sec:algorithm} gives the algorithm in pseudocode and introduces the necessary tools to prove its correctness. Section~\ref{sec:complexity} establishes the complexity bounds that were claimed in Theorem~\ref{theorem:main}. Finally, Section~\ref{sec:extensions} discusses several extensions of the algorithm that do not affect its complexity. In particular, we show that the algorithm can easily be adapted in case the query is given as a regular expression or as an automaton with spontaneous $\varepsilon$-transitions. 

\section{Preliminaries}
\label{sec:preliminaries}

\subsection{Sets, lists and queues}

First, we explicitly state our assumptions about the data structures that are used to represent collections of elements. Indeed, the efficiency of the algorithm hinges in part on carefully choosing the data structure that is used at each step, as some structures are more efficient for traversal, ordered insertion, or copy. In particular, we use the following structures:

\subsubsection*{Maps and sets} 
A \emph{map} $S : X \rightarrow Y$ is implemented as an array of elements of $Y \cup \set{\bot}$ of size~$\card{X}$ hence:
    \begin{itemize}
        \item Creating and initializing a new map takes time $\bigo{\card{X}}$.
        \item Assigning an image $y$ to $x$ or removing $x$ takes time $\bigo{1}$.
        This is denoted as $S[x] \gets y$ or $S[x] \gets\bot$.
        \item Browsing a map takes time $\bigo{\card{X}}$.
    \end{itemize}
A \emph{set} over domain $X$ is simply a map : $X \rightarrow \set{\top,\bot}$.

\subsubsection*{Lists} 
Our lists can be modified in only one way: append an element at the head.  In particular, elements cannot be deleted or replaced. Such immutable singly-linked lists enjoy the following operations.
    \begin{itemize}
        \item Creating a new empty list takes time $\bigo{1}$.
        \item Appending in the head takes time $\bigo{1}$.
        \item Copying a list amounts to copying the head pointer, hence takes time $\bigo{1}$.
    \end{itemize}

\subsubsection*{Restartable queues} 
Our algorithm requires queues that may be restarted.
They are implemented as linked lists with three pointers: start, end, and current. Queues feature the following operations, which all take time $\bigo{1}$:
\begin{itemize}
    \item Creation of an empty queue.
    \item Enqueue, that adds an element at the end of the queue.
    \item Advance, that moves current to the next element.
    \item Peek, that retrieves the element pointed by current.
    \item Restart, that moves current to the start.
\end{itemize}

\subsection{Graph databases}
 
In this document, we model graph databases as multi-labeled, multi-edge directed graphs, and simply refer to them as \emph{databases} for short. Databases are formally defined as follows.

\begin{definition}
    A \emph{database} $\Dc$ is a tuple $(\Sigma, V, E,\src,\tgt,\lbl)$ where:%
    \begin{itemize*}[label={}]
        \item $\Sigma$ is a finite set of symbols, or \emph{labels};
        \item $V$ is a finite set of \emph{vertices};
        \item $E$ is a finite set of \emph{edges};
        \item ${\src}:E\rightarrow V$ is the \emph{source} function;
        \item ${\tgt}:E\rightarrow V$ is the \emph{target} function; and
        \item ${\lbl}:E\rightarrow 2^\Sigma$ is the \emph{labelling} function.
    \end{itemize*}
\end{definition}

Since this article establishes precise
complexity bounds, we need to make our assumptions about the memory representation of databases explicit. We assume the following:

\begin{itemize}    
    \item Every label takes space $\bigo{1}$ and equality of two labels can be checked in time $\bigo{1}$.
    \item 
    Every vertex~$v$ provides the following  in time $\bigo{1}$.
    \begin{itemize}
        \item $\incoming(v)$ : an array of pointers to the edges ending in~$v$
        \item $\indeg(v)$ : the in-degree of $v$, that is $\card{\incoming(v)}$
        \item $\outgoing(v)$ : an array of pointers to the edges starting in~$v$
        \item $\outdeg(v)$ : the out-degree of $v$, that is $\card{\outgoing(v)}$
    \end{itemize}

    \item
    Every edge~$e$ provides the following  in time $\bigo{1}$.
    \begin{itemize}
        \item $\src(e)$ : a pointer to the source vertex of~$e$
        \item $\tgt(e)$ : a pointer to the target vertex of~$e$
        \item $\lbl(e)$ : the label set of~$e$. We assume that we may browse through $\lbl(e)$ in time $\bigo{\card{\lbl(e)}}$.
        \item $\tgtidx(e)$ : the position of $e$ in $\incoming(\tgt(e))$, that is $e=\incoming(\tgt(e))[\tgtidx(e)]$.
    \end{itemize}
\end{itemize}
Altogether, the size of $\Dc = (\Sigma, V, E,\src,\tgt,\lbl)$ satisfies:
\begin{equation*}
    \card{\Dc}  \in \bigo{\card{V}+\card{E}+\sum_{e\in E} \card{\lbl(e)}}
\end{equation*}

\begin{remark}
    The only nonstandard item in the memory representation of databases is the function $\tgtidx$.
    Note that it may be precomputed in time $\bigo{\card{V}+\card{E}}$ if it is not natively provided by the database. Thus, this assumption does not change the complexity bounds promised in Theorem~\ref{theorem:main}.
\end{remark}

\begin{definition}
    A \emph{walk} $w$ in a database $\Dc$ is a nonempty finite sequence of alternating vertices and edges of the form
    $w=\walk{v_0,e_0,v_1,\ldots,e_{k-1},v_k}$ where~$k\geq 0$, $v_0,\ldots,v_k\in V$, $e_0,\ldots,e_{k-1}\in E$, such that:
    \begin{equation*}
        \forall i, 0\mathbin\leq i \mathbin< k, \quad \src(e_i) = v_i \quad\text{and}\quad \tgt(e_i)=v_{i+1}
    \end{equation*}

    We call~$k$ the \emph{length} of~$w$ and denote it by~$\len(w)$.
    We extend the functions $\src$, $\tgt$ and $\lbl$ to the walks in~$\Dc$ as follows.
    For each walk~$w=\walk{v_0,e_0,v_1,\ldots,e_{k-1},v_k}$ in~$D$,
    $\src(w) = v_0$, $\tgt(w)=v_k$, and 
    $\lbl(w)= \setst[\big]{a_0a_1\cdots a_{k-1}}{\forall i, 0\mathbin\leq i < k,~a_i\in\lbl(e_i)}$.
    Finally,~$s\xrightarrow{w}t$ means that $\src(w) = s$ and $\tgt(w) = t$.
    
    We say that two walks~$w,w'$ \emph{concatenate} if $\tgt(w)=\src(w')$, in which case we define their \emph{concatenation} as usual, and denote it by $w\cdot w'$, or simply~$ww'$ for short.

    For ease of notation, we will sometimes omit vertices from walks, as they are implicitly defined by the adjacent edges: for instance, in the database of Figure~\ref{fig:bank}, $\walk{e_1, e_7}$ is short for $\walk{Alix,e_1,Cassie,e_7,Bob}$. Similarly, for a walk $w$ and an edge $e$, we write $w\cdot e$ as a shorthand for $w\cdot \walk{\src(e),e,\tgt(e)}$.

    We use $\walks(\Dc)$ to denote the (possibly infinite) set of all walks in $\Dc$, and $\walks^{\leq\ell}(\Dc)$ to denote the restriction of $\walks(\Dc)$ to walks of length at most $\ell$.
\end{definition}

\subsection{Automata and queries}
\begin{definition}
    \label{def:automaton}
    A nondeterministic \emph{automaton} $\Ac$ is a 5-tuple $\aut{\Sigma,Q,\Delta,I,F}$ where~$\Sigma$ is a finite set of symbols,~$Q$ is a finite set of \emph{states},~$I\subseteq Q$ is the set of \emph{initial} states, $\Delta\subseteq{Q\times \Sigma\times Q}$ is the set of \emph{transitions} and~$F\subseteq Q$ is the set of \emph{final} states.

    As usual, we extend~$\Delta$ into a relation over $Q\times \Sigma^* \times Q$ as follows: for every~$q\in Q$, $(q,\varepsilon,q)\in\Delta$; and for every~$q,q',q''\in Q$ and every~$x,y\in \Sigma^*$, if~$(q,x,q')\in\Delta$ and~$(q',y,q'')\in\Delta$ then~$(q,xy,q'')\in\Delta$.
    We write~$\Delta(q,u)$ and~$\Delta^{-1}(u,q')$ as shorthands for the sets~$\Delta(q,u)=\{q'\mid(q,u,q')\in\Delta\}$ and~$\Delta^{-1}(u,q')=\{q\mid(q,u,q')\in\Delta\}$, respectively. 

    We denote by $L(\Ac)$ the \emph{language} of~$\Ac$, defined as follows.
    \begin{equation*}
        L(\Ac)= \setst[\big]{x\in\Sigma^*}{\exists i\in I,~\exists f\in F,~(i,x,f)\in\Delta}
    \end{equation*}
\end{definition}

As for databases, we explicitly state our assumptions about the memory representation of automata.

\begin{itemize}
    \item States are encoded as integers, that is $Q$ is assumed to be equal to~$\{0,\ldots,\card{Q}-1\}$.
    \item Given a state~$q$ and a label~$a$, one may access $\Delta(q,a)$ in time $\bigo{1}$, and $\Delta(q,a)$ is a (non-repeating) list of states.
\end{itemize}
Altogether, the size of~$\Ac$ is $\card{\Sigma}+\card{Q}+\card{\Delta}+\card{I}+\card{F}$, which is in $\bigo{\card{\Sigma}+\card{Q}+\card{\Delta}}$.

\begin{definition}
    A \emph{Regular Path Query} (RPQ) is defined by a regular language, given as an automaton $\Ac$.
    Given an RPQ $\Ac$ and a walk $w$ in a database $\Dc$, we say that $\Ac$ \emph{matches} $w$ (or equivalently, that $w$ matches $\Ac$) if $L(\Ac) \cap \lbl(w) \neq \emptyset$.
    We use $\match(\Ac,\Dc)$ to denote the (possibly infinite) set of walks that match $\Ac$.
\end{definition}

\subsection{Distinct shortest walks}

Given a query $\Ac$, a database $\Dc$, and two vertices $s$ and $t$ of $\Dc$, our goal is to enumerate the set of all shortest walks from $s$ to $t$ that match $\Ac$. 
This set is written $\sem{\Ac}(\Dc,s,t)$ and defined as follows.


\begin{definition}\label{def:lambda}
    Let $\Ac$ be an automaton, $\Dc$ a database and $s,t$ two vertices of $\Dc$. Let $\lambda = \min\setst{\len(w)}{w \in \match(\Ac,\Dc), s\xrightarrow{w}t}$. Then $\sem{\Ac}(\Dc,s,t)$ is defined as:
    \begin{equation*}
        \sem{\Ac}(\Dc,s,t) = \setst{w\in \match(\Ac,\Dc)}{\len(w) \mathbin= \lambda}
    \end{equation*}
\end{definition}

We can now restate the main problem as:

\begin{problem}{Distinct Shortest Walks}
    \item[Inputs:] A multi-labeled multi-edge database~$\Dc$, and two vertices~$s,t$ in~$\Dc$.
    \item[Query:] A nondeterministic finite automaton~$\Ac$.
    \item[Output:] Enumerate $\sem{\Ac}(\Dc,s,t)$, without duplicates.
\end{problem}

\begin{example}\label{ex:bank}
    In the graph database $\Dc$ of Figure~\ref{fig:bank}, vertices represent people and edges represent bank transfers. Transfers can have up to two labels: $h$ for ``high value'' and $s$ for ``suspicious''.

    Assume that we are searching for fraudulent behavior. We want to find sequences of transfers from Alix to Bob that contain only high value or suspicious transfers, with at least one of them being suspicious. This corresponds to computing $\sem{\Ac}(\Dc,Alix,Bob)$ with $\Ac$ being the two-state automaton that captures $h^* s(h+s)^*$.

    We can now remark the following:
    \begin{itemize}
        \item The shortest walk from Alix to Bob, $\walk{e_1, e_7}$, is of length 2. However, it does not match $\Ac$, as $hh\notin L(\Ac)$.
        \item Four walks of minimal length 3 match $\Ac$: $w_1 = \walk{e_1, e_5, e_8}$, $w_2 = \walk{e_1, e_6, e_8}$, $w_3 = \walk{e_2, e_3, e_7}$ and $w_4 = \walk{e_2, e_4, e_8}$.
        \item $w_4$ carries \emph{three} labels that belong to $L(\Ac)$: $shh$, $hhs$ and $shs$. It is still returned only \emph{once}. The same holds for $w_2$ and $w_3$, which carry \emph{two} suitable labels.
        \item Even though they visit the same vertices, $w_1$ and $w_2$ \emph{are not} the same walk, and are \emph{both} returned. Indeed, $e_5$ and $e_6$ do not represent the same transfer. In the database, they might have different amounts, dates, operating banks...
        \item The walk $w_5 = \walk{e_2, e_3, e_6, e_8}$ matches $\Ac$, but is not returned, as it is not of minimal length.
    \end{itemize}
\end{example}

\section{The algorithm}
\label{sec:algorithm}

\begin{figure*}[t]\newcommand{\unit}{\hspace*{1pt}}
    \bigskip
    \caption{Pseudocode of the main algorithm}
    \label{f:pseudocode}
    \bigskip
    \bigskip
    \bigskip
    \begin{minipage}[t][.88\textheight]{\linewidth/2-\columnsep/2}
    Inputs of the algorithm, also used as global variables:
    \begin{itemize}[labelindent=5pt,leftmargin=*]
        \item Automaton $\Ac = (\Sigma,Q,\Delta,I,F)$
        \item Database $\Dc = (\Sigma,V,E,\src,\tgt\lbl)$
        \item Source vertex $s\in V$
        \item Target vertex $t\in V$
    \end{itemize}

    \bigskip\bigskip\bigskip\bigskip
    
    \begin{vmalgorithm}
    \Function{Main}{\unit}
        \State $B, L, \lambda \gets \Call{Annotate}{\unit}$
        \State $C \gets \Call{Trim}{B}$
        \State $S \gets \{ q \mid  L_t\At{q} =\lambda \}$ \label{main:build first set}
        \State $\Call{Enumerate}{C,\lambda,\walk{t},S\cap F}$
    \EndFunction
    \end{vmalgorithm}

    \bigskip\bigskip\bigskip\bigskip

    \begin{vmalgorithm}
    \Function{Annotate}{\unit}
        \State $\ell \gets 0$   
        \State $\varname{current} \gets $ empty list
        \State $\varname{next} \gets $ empty list 
        \ForAll {vertex $u$ in $V$}
            \State $B_u \gets{}$new \Map: $Q \to \{ 0,\ldots, \indeg(u)-1 \}$
            \Statex \makebox[\linewidth][r]{$\to\List{Q}$}, 
            \Statex \phantom{$B_u \gets{}$}fully initialised with empty lists
            \label{annotate:init Au}
            \State $L_u \gets $ new empty \Map{} : $Q \to \N$
        \EndFor
            
        \ForAll {state $p$ in $I$}
            \State $L_s[p] \gets \ell$
            \State add $(s,p)$ to $\varname{next}$
        \EndFor

        \Statex

        \State $\varname{stop} \gets \bot$ \label{annotate:main loop}  
        \While {$\varname{next}$ is not empty ~\textbf{and}~ $\varname{stop}=\bot$}
            \State $ \ell \gets \ell+1$
            \State $\varname{current} \gets \varname{next}$
            \State $\varname{next} \gets $ empty list
            \ForAll {$(v,q)\in \varname{current}$}\smashedcommentpar{NB: $(v,q)\in \varname{current}$\\
            ~~$\iff L_v[q] = \ell-1$.}\label{annotate:for(v,q)}
                    \ForAll {$e\in Out(v)$}\label{annotate:for(e)}
                        \State $u \gets \tgt(e)$
                        \ForAll {$p \in \Delta(q,\lbl(e))$}\label{annotate:for(p)}
                            \If {$p \notin \dom(L_u)$}\label{annotate:inner loop start}
                                \smashedcommentpar{First time state $p$\\ is reached at vertex $u$}
                                \State $L_u[p] \gets \ell$
                                \State add $(u,p)$ to $\varname{next}$
                                \If{ $u=t$ ~\textbf{and}~ $p\in F$}                   \smashedcommentpar{First time a final\\ state is reached at\\ vertex $t$}
                                    \State $\varname{stop}\gets \top$ 
                                \EndIf
                                \Statex
                                \State add $q$ to $B_u[p][\tgtidx(e)]$
                                \label{annotate:update1}
                            \ElsIf {$L_u[p] = \ell$}
                                \smashedcommentpar{We found another walk of\\ length $\ell$ that reaches\\state $p$ at vertex~$u$. }
                                \Statex
                                \Statex
                                \State add $q$ to $B_u[p][\tgtidx(e)]$
                                \label{annotate:update2}\label{annotate:inner loop end}
                            \EndIf
                        \EndFor
                    \EndFor
            \EndFor
        \EndWhile 
        \State \Return {$(B, L, \ell)$}
    \EndFunction
    \end{vmalgorithm}%
    \end{minipage}\hfill%
    \begin{minipage}[t][.88\textheight]{\linewidth/2-\columnsep/2}
    \begin{vmalgorithm}
    \Function{Trim}{Annotation $B$}
        \ForAll{$u\in V$} \label{trim:for(u)}
            \State $C_u \gets{}$new Map $Q \to{} \Queue{In(u) \times \List{Q}}$, 
            \Statex $\phantom{C_u \gets{}}$initialised with empty queues
            \ForAll{$p\in Q$}\label{trim:for(q)}
                \ForAll{$e\in \incoming(u)$}\label{trim:for(e)}\smashedcommentpar{Recall that $\incoming(u)$ is sorted\\ in increasing $\tgtidx$ order.}
                    \Statex
                    \If{$B_u[p][\tgtidx(e)]$ is not empty}\label{trim:if not empty}
                        \State enqueue $\left(e,B_u[p][\tgtidx(e)]\right)$ in $C_u[p]$
                    \EndIf
                \EndFor
            \EndFor
            \Statex \commentpar[.9\linewidth]{Note that, for all~$u,p$, the queue $C_u[p]$ is sorted by first component: the pairs $(e,\_)$ are in increasing order of $\tgtidx(e)$}
        \EndFor
        \State \Return {$C$}
    \EndFunction
    \end{vmalgorithm}

    \bigskip\bigskip\bigskip\bigskip\bigskip

    \begin{vmalgorithm}
    \Function{Enumerate}{}
        (\begin{tabular}[t]{@{}l@{}}
         Trimmed annotation $C$,~
         Integer~$\ell$, 
         \\
         Walk~$w$,~
         State set~$S$)
        \end{tabular}
        \State $u \gets \src(w)$ \label{enumerate:beginning}
        \If {$\ell = 0$}\smashedcommentpar{Remark that $\ell = 0$ implies $u=s$ and~$S\subseteq I$. \\ No need to verify it.}
                \Statex
                \State \textbf{output $w$} \label{enumerate:base}
        \Else
            \While {$\top$} \label{enumerate:main loop}
                \State $e_{min} \gets nil$ \label{enumerate:compute min start}
                \ForAll {$p \in S$}
                    \If {$C_u[p]$ is not empty} 
                        \State $(e,X) \gets $ peek $C_u[p]$
                        \If {$e_{min} {=}\, nil$ \,\textbf{or}\, $\tgtidx(e) {<} \tgtidx(e_{min})$}
                            \State $e_{min} \gets e$
                        \EndIf
                    \EndIf
                \EndFor
                \If{$e_{min} = nil$} \smashedcommentpar{All queues are exhausted. Never \\ happens on the first iteration.}
                    \Statex
                    \ForAll{$p\in{}S$}
                        \State restart $C_u[p]$ \label{enumerate:restart}
                    \EndFor
                    \State \Return
                \EndIf \label{enumerate:compute min end} \label{enumerate:return}
                
                \Statex
                \State $S' \gets $ new empty subset of~$Q$ \label{enumerate:compute X start}
                \ForAll {$p \in S$}
                    \If {$C_u[p]$ is not empty} 
                        \State $(e,X) \gets $ peek $C_u[p]$
                        \If {$e = e_{min}$}
                            \ForAll{$q \in X$}\label{enumerate:for(q)}
                                \State $S'[q] \gets \top$
                            \EndFor
                            \State advance $C_u[p]$
                        \EndIf
                    \EndIf
                \EndFor \label{enumerate:compute X end}
    
                \State \Call{Enumerate}{$C$,~ $\ell\,{-}\,1$,~ $e_{min}\cdot w$,~ $S'$}
                \label{enumerate:enumerate}

            \EndWhile
        \EndIf
    \EndFunction
    \end{vmalgorithm}%
    \end{minipage}
\end{figure*}









\begin{figure*}
    \begin{tikzpicture}[remember picture, edge/.style = {-stealth, shorten >=5pt}]
    \node[vertex] (A) at (0,0) {$Alix$};
    \node[vertex] (B) at (12,0) {$Bob$};
    \node[vertex] (C) at (6,1.5) {$Cassie$};
    \node[vertex] (D) at (4,-1.5) {$Dan$};
    \node[vertex] (E) at (8,-1.5) {$Eve$};

    \draw (A) edge[in = 180, high] node[near start, above, sloped] {$h$} node[above, pos=.94,index] {1} (C);
    \draw (A) edge[out = -45, in = 180, high suspicious] node[midway, above, sloped] {$h,s$} node[pos=.9,above,index] {0} (D);
    
    \draw (C) edge[high,out = 0] node[near start, sloped, above] {$h$}  node[pos=.97,left,index,anchor=east, inner sep=5pt] {1} (B);
    
    \draw (D) edge[high, pos=0.9] node[near start, sloped, above] {$h$} node[below,index, pos=.9,anchor=north] {0} (E);
    \draw (C) edge[out = -90, in = 150, high] node[midway, sloped, above] {$h$} node[above,index,pos=.9,anchor=south] {1} (E);
    \draw (C) edge[out = -30, in = 90, suspicious] node[midway,sloped,above] {$s$} node[right,index, pos=.875, anchor=west] {2} (E);

    \draw (D) edge[out = 90, in = 210, suspicious] node[near start,sloped, above] {$s$} node[pos=.92, below,index] {0} (C);

    \draw (E) edge[out = 0, in = 225, high suspicious] 
        node[above,near start, sloped] {$h,s$} 
        node[pos=.95,left,index,anchor=east, inner sep=5pt] {0} (B);

    \newcommand{\annot}[2]{%
    \begin{tabular}{ccccl@{}}
    q&$L_{#1}[q]$&i&$B_{#1}[q][i]$&$C_{#1}[q]$\\\midrule #2\end{tabular}}
    \newcommand{\twolines}[1]{\multirow{2}{*}{#1}}
    \newcommand{\threelines}[1]{\multirow{3}{*}{#1}}
    \newcommand{\myrule}{\rule{0pt}{\heightof{0}}}
    \newcommand{\subnoderule}[1]{\subnode{#1}{\myrule}\hspace*{3em}\subnode{#1r}{\myrule}}

    \path[annotation edge] (A) -| ++(135:25mm) node[annotation, anchor=south] {
    \begin{tabular}{cc}
    q&$L_{Alix}[q]$\\
    \midrule
    0 & 0 \\
    \midrule
    1 & $\bot$
    \end{tabular}};

    \path[annotation edge] (B) to ++(90:2cm) node[annotation,anchor=-90] {\annot{Bob}{
        \twolines{0} & \twolines{2} &0 & [] & \\
        \cline{3-4}
                       & &1 & [0] & \subnoderule{B1}\\
        \midrule
        \twolines{1} & \twolines{3}  &0 & [1,0,1] & \subnoderule{B3} \\ 
        \cline{3-4}
                       & &1 & [1]   & \subnoderule{B4}\\
    }};
    
    \path[llist] (B1r) to[out=180] (B1) node[point]{};;
    \path[llist] (B3r) to[out=180] (B3) node[point]{} to (B4) node[point]{}; 

    \path[annotation edge] (C) to ++(90:1cm) node[annotation,anchor=-45] {\annot{Cassie}{
        \twolines{0} & \twolines{1} &0 & []   & \subnoderule{C1} \\ 
        \cline{3-4}
                       & &1 & [0]  & \subnoderule{C2} \\
        \midrule
        \twolines{1} & \twolines{2}  &0 & [0,1] & \subnoderule{C3}  \\  
        \cline{3-4}
                       & &1 & [] & \subnoderule{C4}  \\ 
    }};
    
    \path[llist] (C2r) to[out=180] (C2) node[point]{};
    \path[llist] (C3r) to[out=180] (C3) node[point]{};

    \path[annotation edge] (D) |- ++(-4mm,-15mm) node[annotation,anchor=east] (D-annot) {\annot{Dan}{
        0 & 1 & 0 & [0]  & \subnoderule{D1}\\
        \midrule
        1 & 1 & 0 & [0] & \subnoderule{D2}\\ 
    }};%
    
    \path[llist] (D1r) to[out=180] (D1) node[point]{};%
    \path[llist] (D2r) to[out=180] (D2) node[point]{}; %

    \path[annotation edge] (E) |- ++(8mm,-20mm) node[annotation,anchor=west] {\annot{Eve}{
        \threelines{0} & \threelines{2} &0 & [0]  & \subnoderule{E1}\\ 
        \cline{3-4}
                         & &1 & [0]  & \subnoderule{E2}\\  
        \cline{3-4}
                         & &2 & []   & \subnoderule{E3}\\
        \midrule
        \threelines{1} &\threelines{2}  &0 & [1] &\subnoderule{E4} \\ 
        \cline{3-4}
                         & &1 & []  &\subnoderule{E5}\\ 
        \cline{3-4}
                         & &2 & [0] &\subnoderule{E6}\\
    }};
    \path[llist]
        (E1r) to[out=180] 
        (E1) node[point]{} to 
        (E2) node[point]{};
    \path[llist] (E4r) 
                    to[out=180] (E4) node[point]{} 
                    to[looseness=2] (E6) node[point]{};

    \path (current bounding box.south) ++(-7mm,4mm) node[state] (s0) {0} ++(0:1.5cm) node[state, double] (s1) {1};

    \path(s0) ++(180:1.2cm) node (Ac) {$\mathcal{A}:$};
    \path(s0) ++(180:.75cm) node (s0i){};
        
    \path[transition] (s0) to node[above] {$s$} (s1);
    \path[transition, north loop, shorten >=2pt] (s0) to node[above] {$h$} (s0);
    \path[transition, north loop] (s1) to node[above] {$h+s$} (s1);
    \path[transition, shorten >=0] (s0i) to (s0);
\end{tikzpicture}%
    \caption{Automaton $\Ac$ for $h^*s(h+s)^*$ and the annotation of the example database $\Dc$ after preprocessing for $\sem{\Ac}(\Dc,Alix,Bob)$ }
    \label{fig:annotated}
\end{figure*}

In this section, we describe the main algorithm and introduce the necessary tools to prove its correctness. The pseudocode of the main function \functionfont{Main} and its subfunctions \functionfont{Annotate}, \functionfont{Trim} and \functionfont{Enumerate} are given in Figure~\ref{f:pseudocode}, page~\pageref{f:pseudocode}. This section provides the main ideas and formal statements of all steps of the proofs. Due to space constraints, the technical details of the proofs have been moved to Appendix~\ref{appendix:proofs}.

We fix a graph database $\Dc = (\Sigma, V, E,\src,\tgt,\lbl)$, a query $\Ac = \aut{\Sigma,Q,\Delta,I,F}$,  and two vertices $s$ and $t$ of $\Dc$. We assume that there exists at least one walk from $s$ to $t$ that matches $\Ac$ and we let $\lambda$ be the minimal length of any such walk.

The \functionfont{Main} function just calls the other three functions.
The first two functions, \functionfont{Annotate} and \functionfont{Trim}, correspond to
the preprocessing phase.  Figure~\ref{fig:annotated} gives the annotations $L_u$, $B_u$ and $C_u$ of the preprocessing for Example~\ref{ex:bank}. The $\tgtidx$ of edges are written at the head of arrows. 
The last function, \functionfont{Enumerate}, corresponds to the enumeration phase.
The following sections detail each function.

\subsection{Annotate}

\functionfont{Annotate} is the main preprocessing function. Intuitively, its purpose is to precompute, for all vertex $u$ of $\Dc$ and state $p\in Q$, the shortest walks $w$ that start in $s$, end in $u$ and carry a label that reaches $p$ in $\Ac$. However, representing this set of walks explicitly could take exponential time and space. Instead, \functionfont{Annotate} will annotate each vertex $u$ with two maps, $L_u$ and $B_u$.%
The map $L_u$ ("length") records the length of the shortest walks that can reach the states of $\Ac$ at $u$.  The map $B_u$ ("back") records the last edge used along said walks, in order to be able to reconstruct them backwards.

Formally, if $w$ is a shortest walk from $s$ to $u$ with a label that reaches $p$, it will be reflected as follows in the two maps:

\begin{itemize}
    \item $L_u[p] = \len(w)$. $L_u$ is a partial map: if no such $w$ exists, $L_u[p]$ is undefined.
    \item $q\in B_u[p][\tgtidx(e)]$, where $e$ is the last edge of $w$ and $p\in\Delta(q,\lbl(e))$, that is, $q$ is a possible predecessor state to $p$ along the walk. $B_u[p][\tgtidx(e)]$ is a list, and thus may contain duplicates. However, its total size will never exceed $\sum_{a\in\Sigma}\card{\Delta^{-1}(a,p)}$.
\end{itemize}

\noindent \functionfont{Annotate} populates the maps by performing a breadth-first traversal of the product graph $\Dc \times \Ac$.  It stops at the end of step $\lambda$, where $\lambda$ is the length of a shortest walk from $s$ to $t$ that matches $\Ac$. Indeed, it will be the first iteration of the traversal in which the target vertex $t$ is reached together with a final state of~$\Ac$.

The following lemma ensures the correctness of \functionfont{Annotate}. 

\begin{lemma} At the end of \functionfont{Annotate}, the following properties hold for all $p,q\in Q$, $u\in V, e\in E$ and $i\leq \indeg(u)$:
    \begin{enumerate}
        \item $L_u[p] = \min\setst*{\len(w)}{
            \begin{array}{l}
                w\in\walks^{\leq\lambda}(\Dc), s\xrightarrow{w}u \\ \text{ and } p\in\Delta(I,\lbl(w))
            \end{array}}$
        \label{annotatelemma:a}

        \item $q\in{}B_u[p][i]$ if and only if there exists a walk $w$ from $s$ to $u$ of the form $w = w'\cdot{}e$ such that:
            \begin{itemize}
                \item $\len(w) = L_u[p]$
                \item $\tgtidx(e) = i$
                \item $q\in\Delta(I,\lbl(w'))$
                \item $p\in\Delta(q,\lbl(e))$
            \end{itemize}
        \label{annotatelemma:b}

        \item $B_u[p][\tgtidx(e)]$ is of size at most $\sum_{a\in\Sigma}\card{\Delta^{-1}(a,p)}$.
        \label{annotatelemma:c}
    \end{enumerate}
    \label{lem:annotate}
\end{lemma}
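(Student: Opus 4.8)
The plan is to prove the three properties together by induction on $\ell$, following the outer \texttt{while} loop of \functionfont{Annotate}, since all three refer to the state of the maps $L$ and $B$ at termination but are really established incrementally as the breadth-first traversal proceeds. More precisely, I would prove the following loop invariant: at the start of the iteration in which $\ell$ takes a given value, for every $u$ and $p$ with $L_u[p]$ already defined, $L_u[p]$ equals the minimal length of a walk $w$ with $s\xrightarrow{w}u$ and $p\in\Delta(I,\lbl(w))$, and that minimal length is $<\ell$; and $B_u[p]$ correctly records all last edges of such minimal walks as described in part~\ref{annotatelemma:b}. The key observations making the induction go through are: (i) a pair $(v,q)$ is added to $\varname{next}$ exactly when $L_v[q]$ is first set, so $(v,q)\in\varname{current}$ during iteration $\ell$ iff $L_v[q]=\ell-1$ (this is exactly the comment on line~\ref{annotate:for(v,q)}); and (ii) $p\in\Delta(I,\lbl(w))$ for a walk $w=w'\cdot e$ of length $\ell$ ending at $u$ iff there is a state $q$ with $q\in\Delta(I,\lbl(w'))$ and $p\in\Delta(q,\lbl(e))$, i.e. iff some predecessor $(v,q)$ with $v=\src(e)$, $L_v[q]=\ell-1$, and $p\in\Delta(q,\lbl(e))$ is processed in iteration $\ell$.

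For part~\ref{annotatelemma:a}, the $\leq$ direction follows because $L_u[p]$ is only ever set to the current value of $\ell$, and by the invariant there is always a witnessing walk of that length; the $\geq$ direction (i.e. $L_u[p]$ is not larger than the true minimum, and is in fact defined whenever the true minimum is $\leq\lambda$) follows from the fact that the traversal runs through all iterations $1,\dots,\lambda$, and by observation (ii) any minimal walk of length $\ell\leq\lambda$ has a predecessor pair processed in iteration $\ell$, so $L_u[p]$ gets set no later than iteration $\ell$. One subtlety to be careful about: the traversal stops as soon as a final state is reached at $t$ (the $\varname{stop}$ flag), so I must argue that this happens exactly at the end of iteration $\lambda$ — i.e. $\lambda$ really is the first $\ell$ at which some $(t,f)$ with $f\in F$ is enqueued — which is immediate from observation (ii) and the definition of $\lambda$, but worth stating. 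For part~\ref{annotatelemma:b}, both directions are essentially bookkeeping on lines~\ref{annotate:update1} and~\ref{annotate:update2}: $q$ is appended to $B_u[p][\tgtidx(e)]$ precisely when, during iteration $\ell=L_u[p]$, we process an edge $e$ out of some $v$ with $(v,q)\in\varname{current}$, $\tgt(e)=u$, and $p\in\Delta(q,\lbl(e))$ — and conversely every such configuration triggers exactly one such append. Matching this against the four bullet conditions, using $L_v[q]=\ell-1$ and observation (ii), gives the equivalence.

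For part~\ref{annotatelemma:c}, I would bound the number of appends to $B_u[p][\tgtidx(e)]$ directly: each append comes from a distinct triple $(v,q,e)$ processed in the loops of lines~\ref{annotate:for(v,q)}–\ref{annotate:for(p)} with $\tgt(e)=u$, $\tgtidx(e)$ fixed (hence $e$ fixed — the index determines the edge), $p\in\Delta(q,\lbl(e))$, and each pair $(v,q)$ appears in $\varname{current}$ at most once across the whole run. So the number of appends is at most the number of pairs $(q,a)$ with $a\in\lbl(e)$ and $q\in\Delta^{-1}(a,p)$, which is at most $\sum_{a\in\Sigma}\card{\Delta^{-1}(a,p)}$. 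The main obstacle, such as it is, is purely organisational: getting the loop invariant stated tightly enough that observation~(i) (the characterisation of $\varname{current}$) and the ``first time a state is reached'' reasoning interlock cleanly — in particular making sure that once $L_u[p]$ is set it is never overwritten, so that ``the'' value $L_u[p]$ at termination is well defined and equals the value assigned in iteration $\ell=L_u[p]$. Everything downstream is then routine case analysis on the two branches \textbf{if}/\textbf{elsif} of the inner loop.
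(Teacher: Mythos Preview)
Your proposal is correct and follows essentially the same approach as the paper: an induction on the iteration counter $\ell$, with an auxiliary invariant characterising $L_u$, $B_u$, and the contents of $\varname{next}/\varname{current}$ at each step (the paper phrases this as properties of $L_u^{(\ell)}$, $B_u^{(\ell)}$, $\text{next}^{(\ell)}$, with your observation~(i) appearing as their property~$(\star)$). Your explicit discussion of the $\varname{stop}$ flag and why termination occurs exactly at the end of iteration~$\lambda$ is a point the paper leaves implicit, and your argument for part~\ref{annotatelemma:c} is the same counting bound as theirs.
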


\subsection{Trim}

\functionfont{Trim} is the second step of the preprocessing phase.  It translates each map $B_u[p]$
into a queue~$C_u[p]$, essentially by removing empty $B_u[p][\tgtidx(e)]$.
During the enumeration phase, this allows to efficiently iterate over all edges~$e$ such that the list $B_u[p][\tgtidx(e)]$ is nonempty.
Indeed, browsing $B_u[p]$ directly would increase the delay by a factor~$d$, the maximal in-degree of~$\Dc$.




Formally, $C_u[p]$ is a queue of pairs $(e,X)$, where $e\in\incoming(u)$ and $X$ is a nonempty list over $Q$. The correspondence between $C_u$ and $B_u$ is formally stated in the lemma below, whose proof immediately follows from the pseudocode of \functionfont{Trim}.

\begin{lemma}
    At the end of \functionfont{Trim}, the following properties hold for all $p\in Q$, 
    $u\in V$, $e,e'\in \incoming(u)$ and lists $X,X'$ over $Q$:
    \begin{enumerate}
        \item $(e,X) \in C_u[p]$ if and only if $X = B_u[p][\tgtidx(e)]$ and $X \neq \emptyset$.
        \item $C_u[p]$ is sorted in increasing $\tgtidx(e)$ order: if $(e,X)$ appears before $(e',X')$ in $C_u[p]$, then we necessarily have $\tgtidx(e) < \tgtidx(e')$. In particular, $e \neq e'$.
        \item If $(e,X) \in C_u[p]$, then $X$ is of size at most $\sum_{a\in\Sigma}\card{\Delta^{-1}(a,p)}$.
    \end{enumerate}
    \label{lem:trim}
\end{lemma}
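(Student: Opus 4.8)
The plan is to obtain all three properties by a direct reading of the pseudocode of \functionfont{Trim}, which is a triple nested loop whose only side effect is the \texttt{enqueue} instruction. The structural fact I would record first is that, for a fixed vertex $u$, the loop \textbf{for each} $e \in \incoming(u)$ visits the edges ending in $u$ in strictly increasing order of $\tgtidx(e)$ and visits each of them exactly once: indeed $\incoming(u)$ is the array with $\incoming(u)[\tgtidx(e)] = e$, so the map $e \mapsto \tgtidx(e)$ is a bijection from $\incoming(u)$ onto $\{0,\ldots,\indeg(u)-1\}$, in particular injective. Property~(1) then follows by observing that a pair belongs to $C_u[p]$ exactly when it was enqueued, and that for fixed $u$ and $p$ the \texttt{enqueue} line is executed precisely once for each $e\in\incoming(u)$ for which the test ``$B_u[p][\tgtidx(e)]$ is not empty'' succeeds, contributing the pair $(e, B_u[p][\tgtidx(e)])$. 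This yields: $(e,X)\in C_u[p]$ iff $X = B_u[p][\tgtidx(e)]$ and $X\neq\emptyset$ (the side condition $e\in\incoming(u)$ being subsumed, as $B_u[p][\tgtidx(e)]$ is only meaningful when $\tgt(e)=u$).

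For property~(2), since the loop over $\incoming(u)$ enqueues in increasing $\tgtidx$ order and a queue preserves insertion order, the pairs occur in $C_u[p]$ in increasing order of the $\tgtidx$ of their first component; combined with the injectivity of $e\mapsto\tgtidx(e)$ on $\incoming(u)$, two pairs occurring in the order $(e,X)$ then $(e',X')$ satisfy $\tgtidx(e)<\tgtidx(e')$, hence $e\neq e'$. For property~(3), I would simply combine property~(1) with the third item of Lemma~\ref{lem:annotate}: if $(e,X)\in C_u[p]$ then $X = B_u[p][\tgtidx(e)]$, and the latter has size at most $\sum_{a\in\Sigma}\card{\Delta^{-1}(a,p)}$, so the same bound holds for $X$.

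I do not anticipate any genuine obstacle: the lemma is a bookkeeping statement and, as already noted in the text, is an immediate consequence of the pseudocode. The only point that deserves a moment of care is the justification that $\incoming(u)$ is traversed in increasing $\tgtidx$ order and that distinct incoming edges of $u$ receive distinct $\tgtidx$ values — which is exactly what the memory-representation assumptions on databases guarantee, through the identity $\incoming(\tgt(e))[\tgtidx(e)] = e$.
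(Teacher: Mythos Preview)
Your proposal is correct and follows essentially the same approach as the paper's own proof: property~(1) is read off directly from the pseudocode of \functionfont{Trim}, property~(2) comes from the fact that $\incoming(u)$ is traversed in $\tgtidx$ order together with the definition of $\tgtidx$, and property~(3) is obtained by combining property~(1) with item~\ref{annotatelemma:c} of Lemma~\ref{lem:annotate}. If anything, you are slightly more explicit than the paper about the injectivity of $e\mapsto\tgtidx(e)$ on $\incoming(u)$, but this is a minor elaboration rather than a different argument.
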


\subsection{Enumerate}

Finally, \functionfont{Enumerate} makes use of the precomputed structures $C_u$ to handle the enumeration phase. In order to prove its correctness, we formally define the backward-search
tree of the set of answers and show how it relates to $C_u$.

\begin{definition}
    The backward-search 
    tree $\Tc$ of $\sem{A}(\Dc,s,t)$ is the tree defined as follows:
    \begin{enumerate}
        \item The nodes of $\Tc$ are the suffixes of walks in $\sem{A}(\Dc,s,t)$: a walk $w$ is a node of $\Tc$ if and only if there exists a walk $w'$ such that $w' \cdot w \in \sem{A}(\Dc,s,t)$.
        \item The root of $\Tc$ is the walk $\walk{t}$.
        \item The children of a node $w$ are all the walks of the form $e\cdot{}w$ that are nodes in $\Tc$.
        \item The children of a node $w$ are ordered according to the target index of their first edge, that is, if $w_1 = e_1\cdot w$ and $w_2 = e_2\cdot w$ are two distinct children of $w$, then $w_1$ appears before $w_2$ if and only if $\tgtidx(e_1) < \tgtidx(e_2)$. \label{suffixtree:order} 
    \end{enumerate}
    \label{def:suffix-tree}
\end{definition}

\begin{remark}
    The definition of $\Tc$ implies the following properties:
    \begin{itemize}
        \item In item~\ref{suffixtree:order}, it cannot be that $\tgtidx(e_1) = \tgtidx(e_2)$, as $e_1$ and $e_2$ are distinct and have the same target.
        \item All branches of $\Tc$ from the root to a leaf are of length $\lambda$.
        \item The leaves of $\Tc$ are precisely the walks of $\sem{A}(\Dc,s,t)$.
    \end{itemize}
\end{remark}

Intuitively, \functionfont{Enumerate} performs a depth-first traversal of $\Tc$ and outputs precisely the walks that are found at the leaves. The tree will be constructed on the fly, together with a certificate that witnesses that each branch of the tree does indeed correspond to a walk in $\sem{A}(\Dc,s,t)$. This certificate consists in a subset~$S(w)$ of $Q$ attached to each node~$w$ of $\Tc$ that guarantees the existence of at least one accepting run of~$\Ac$ over $w$. It is defined as follows:

\begin{definition}
    For each node $w$ in $\Tc$, we denote by~$S(w)$ the following set.
    \begin{equation*}
        S(w) = \setst*{q\in Q}{\begin{array}{l}
        \exists w_q\in\walks(\Dc)\\
        ~\bullet~(w_q\cdot w)\in \sem{A}(\Dc,s,t) \\
        ~\bullet~q\in\left(\Delta(I,\lbl(w_q))\cap \Delta^{-1}(\lbl(w),F)\right)
        \end{array}}
    \end{equation*}
\end{definition}

Remark that the definition immediately implies that, for a node $w$ in $\Tc$, $S(w) \neq \emptyset$.

Intuitively, $\Tc$ and $S(\cdot)$ can be reconstructed from $C$ as follows. Assume that $\Tc$ and $S(\cdot)$ have already been constructed up to some node $w$. All $p\in S(w)$ are states that can be reached at $w$ and are useful in at least one accepting run that reaches $t$. Then, computing the children of $w$ in $\Tc$ (and $S(\cdot)$ for them) amounts to looking for predecessor edges and states that reach $p$ at $w$. This is precisely the information provided by $C$, as formally stated below.

\begin{lemma}
    \label{lem:c-tree}
    Let~$e\cdot w$ be a node of $\Tc$ for some edge $e$ and walk~$w$. Let~$u = \tgt(e)$. For every~$p$, we denote by~$X_{p}$
    the unique\footnote{There can be at most one, due to Lemma~\ref{lem:trim}, item~\ref{trimlemma:b}.} list of states such that $(e,X_p) \in C_{u}[p]$ if such a list exists,
    or~$X_{p}=\emptyset$ otherwise.
    Then, the following holds.
    \begin{equation*}
        S(e\cdot{}w) = \bigcup_{p\in S(w)} set(X_{p})
    \end{equation*}
\end{lemma}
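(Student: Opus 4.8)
The statement relates the certificate set $S(e\cdot w)$ to the union over $p \in S(w)$ of the lists $X_p$ obtained from $C_u[p]$, where $u = \tgt(e)$. My plan is to prove the two inclusions separately, unfolding the definition of $S(\cdot)$ and repeatedly invoking Lemma~\ref{lem:annotate} (on $B$) and Lemma~\ref{lem:trim} (to pass from $B$ to $C$). Throughout, I will use the fact that $\Tc$'s branches all have length $\lambda$, so if $e\cdot w$ is a node at depth $\len(e\cdot w)$ from the root $\walk t$, then any walk $w_q$ with $w_q \cdot (e\cdot w) \in \sem{\Ac}(\Dc,s,t)$ has length exactly $\lambda - \len(e\cdot w) = L_u[\,\cdot\,]$ for the relevant state, which is what makes the $B$/$C$ annotation applicable.

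\textbf{Inclusion $\subseteq$.} Take $q \in S(e\cdot w)$. By definition there is a walk $w_q$ with $w_q\cdot e\cdot w \in \sem{\Ac}(\Dc,s,t)$ and $q \in \Delta(I,\lbl(w_q)) \cap \Delta^{-1}(\lbl(e\cdot w), F)$. Since $\lbl(e\cdot w) = \lbl(e)\cdot\lbl(w)$ (reading labels left to right), and $q \in \Delta^{-1}(\lbl(e)\lbl(w), F)$, there is a state $p$ with $p \in \Delta(q,\lbl(e))$ and $p \in \Delta^{-1}(\lbl(w),F)$. First I argue $p \in S(w)$: the walk $w_q \cdot e$ witnesses it, since $w_q\cdot e\cdot w \in \sem{\Ac}(\Dc,s,t)$, $p \in \Delta(I,\lbl(w_q)\lbl(e)) = \Delta(I,\lbl(w_q\cdot e))$, and $p \in \Delta^{-1}(\lbl(w),F)$. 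Next, $w_q \cdot e$ is a walk from $s$ to $u$ ending in $e$; its length equals $L_u[p]$ because $w_q\cdot e$ is a prefix of a shortest $s$–$t$ matching walk of length $\lambda$ and the suffix $w$ has fixed length (here I use the branch-length property of $\Tc$, and Lemma~\ref{lem:annotate}\eqref{annotatelemma:a} to identify this length with $L_u[p]$). Setting $q' := q$ and $w' := w_q$, the four bullet conditions of Lemma~\ref{lem:annotate}\eqref{annotatelemma:b} hold, so $q \in B_u[p][\tgtidx(e)]$. Since this list is nonempty and equals $X_p$ by Lemma~\ref{lem:trim}\eqref{trimlemma:a}, we get $q \in set(X_p)$ with $p \in S(w)$, as required.

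\textbf{Inclusion $\supseteq$.} Conversely, suppose $p \in S(w)$ and $q \in set(X_p)$, i.e. $q \in B_u[p][\tgtidx(e)]$ (using Lemma~\ref{lem:trim}\eqref{trimlemma:a} again, and that $(e,X_p)\in C_u[p]$ forces $X_p = B_u[p][\tgtidx(e)]$). By Lemma~\ref{lem:annotate}\eqref{annotatelemma:b} there is a walk $w' \cdot e$ from $s$ to $u$ with $\len(w'\cdot e) = L_u[p]$, $q \in \Delta(I,\lbl(w'))$, and $p \in \Delta(q,\lbl(e))$. Since $p \in S(w)$, fix a walk $w_p$ with $w_p\cdot w \in \sem{\Ac}(\Dc,s,t)$ and $p \in \Delta(I,\lbl(w_p)) \cap \Delta^{-1}(\lbl(w),F)$. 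The candidate witness for $q \in S(e\cdot w)$ is $w_q := w'$: I must check $w'\cdot e\cdot w \in \sem{\Ac}(\Dc,s,t)$ and the run condition. The run condition is immediate: $q \in \Delta(I,\lbl(w'))$ and $q \in \Delta^{-1}(\lbl(e)\lbl(w), F) = \Delta^{-1}(\lbl(e\cdot w),F)$ since $p \in \Delta(q,\lbl(e))$ and $p \in \Delta^{-1}(\lbl(w),F)$. For membership in $\sem{\Ac}(\Dc,s,t)$: the walk $w'\cdot e\cdot w$ goes from $s$ to $t$, it matches $\Ac$ by the run just exhibited, and its length is $L_u[p] + \len(w)$. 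I then argue this equals $\lambda$ by a length-counting argument: $w_p$ has length $\lambda - \len(w)$ since $w_p\cdot w \in \sem{\Ac}(\Dc,s,t)$, and $w_p$ reaches $p$ at $u$, so by minimality in Lemma~\ref{lem:annotate}\eqref{annotatelemma:a}, $L_u[p] \le \len(w_p) = \lambda - \len(w)$; combined with the fact that $w'\cdot e\cdot w$ is an $s$–$t$ walk matching $\Ac$, hence of length at least $\lambda$, we get $L_u[p] + \len(w) = \lambda$. Thus $w'\cdot e\cdot w \in \sem{\Ac}(\Dc,s,t)$ and $q \in S(e\cdot w)$.

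\textbf{Expected main obstacle.} The routine parts are the label-algebra bookkeeping ($\lbl(e\cdot w) = \lbl(e)\cdot\lbl(w)$, splitting a run at the state $p$) and citing the three sub-items of the two preceding lemmas. The genuinely delicate point is the \emph{length accounting}: making rigorous, in both directions, that the walks produced really have length exactly $\lambda$ (equivalently that $L_u[p] + \len(w) = \lambda$ for every $p \in S(w)$), since the $B$/$C$ annotation only records edges lying on \emph{shortest} such walks. This hinges on the "all branches of $\Tc$ have length $\lambda$" remark together with the minimality clause of Lemma~\ref{lem:annotate}\eqref{annotatelemma:a}; I would isolate the identity $L_{\tgt(e)}[p] = \lambda - \len(w)$ for nodes $e\cdot w$ of $\Tc$ with $p\in S(w)$ as a small auxiliary claim and prove it once, then reuse it in both inclusions.
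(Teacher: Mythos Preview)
Your proposal is correct and follows essentially the same two-inclusion strategy as the paper, invoking Lemma~\ref{lem:annotate}\eqref{annotatelemma:b} and Lemma~\ref{lem:trim}\eqref{trimlemma:a} in each direction and using the length identity $L_u[p]+\len(w)=\lambda$ to certify membership in $\sem{\Ac}(\Dc,s,t)$. If anything, you are slightly more careful than the paper in the $\subseteq$ direction by explicitly choosing $p$ with both $p\in\Delta(q,\lbl(e))$ and $p\in\Delta^{-1}(\lbl(w),F)$, which is exactly what is needed to conclude $p\in S(w)$.
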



We can now explain \functionfont{Enumerate} more precisely.
Start from the walk $\walk{t}$ and $S(\walk{t})$ that is explicitly computed in \functionfont{Main} as the set of final states of $\Ac$ that can be reached after a walk of length $\lambda$. 

The goal of \functionfont{Enumerate} is to make a depth-first traversal of $\Tc$ while rebuilding the tree on the fly. When called on a walk $w$, with $u = \src(w)$, and the state set $S(w)$, \functionfont{Enumerate} makes use of $C_u$ to construct the children $e\cdot w$ of $w$ in $\Tc$, along with their certificate $S(e\cdot{}w)$. Indeed, Lemma~\ref{lem:c-tree} states that $S(e\cdot{}w)$ can be built by looking for the pairs $(e,X)$ in each $C_u[p]$ for $p\in S(w)$. We then proceed with the depth-first traversal by starting over with $w = e\cdot w$ and $S = S(e\cdot{}w)$.

There is one technical hurdle to overcome before making a recursive call for an edge $e$: in order to avoid doing two calls for the same $e$, we have to make sure that we collect all occurrences of pairs $(e,X)$ that appear in any $C_u[p]$ for $p\in S(w)$. However, this cannot be done by browsing each $C_u[p]$ entirely, otherwise the delay would depend on (the maximal in-degree of) $\Dc$. This issue is solved by the fact that $C_u[p]$ is sorted in $\tgtidx$ order, as stated in Lemma~\ref{lem:trim}. Thus, to find the pairs $(e,X)$ that correspond to the first child of $w$ in $\Tc$, we only have to search in the head of each $C_u[p]$.

In the pseudocode of \functionfont{Enumerate}, lines~\ref{enumerate:compute min start}-\ref{enumerate:compute min end} look at the head of each $C_u[p]$ to find the minimal edge $e$ that has not yet been found. Then lines~\ref{enumerate:compute X start}-\ref{enumerate:compute X end} correspond to collecting all $(e,X)$ in the head of each $C_u[p]$ for the found edge $e$. Line~\ref{enumerate:return} corresponds to the case where all $C_u[p]$ have been exhausted.

Finally, \functionfont{Enumerate} keeps track, in variable $\ell$, of the remaining distance to the leaves of $\Tc$, and outputs each time it reaches $\ell = 0$.

The correctness of \functionfont{Enumerate} comes from this final lemma:

\begin{lemma}
    The tree of recursive calls to \functionfont{Enumerate} is isomorphic to $\Tc$ in the following sense: \functionfont{Enumerate}($C,\ell, w, S$) is called exactly once per node $w$ in $\Tc$. Moreover, the parameters satisfy~$\ell = \lambda - \len(w)$ and $S = S(w)$.
    \label{lem:enumerate-tree}
\end{lemma}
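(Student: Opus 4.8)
The plan is to prove the statement by induction on the tree $\Tc$, following its recursive structure top-down from the root. The induction hypothesis at a node $w$ states that \functionfont{Enumerate} is called exactly once with first walk-parameter $w$, and that this call receives $\ell = \lambda - \len(w)$ and $S = S(w)$. The base case is the root $w = \walk{t}$: here I would check that the single initial call made by \functionfont{Main} passes $\ell = \lambda = \lambda - \len(\walk{t})$ and $S = S\cap F$ where $S = \{q \mid L_t[q] = \lambda\}$; I then need to argue $S\cap F = S(\walk{t})$. Unfolding the definition of $S(\walk{t})$, a state $q$ lies in it iff there is a walk $w_q$ with $w_q\cdot\walk{t} = w_q \in \sem{\Ac}(\Dc,s,t)$ and $q \in \Delta(I,\lbl(w_q)) \cap \Delta^{-1}(\varepsilon, F) = \Delta(I,\lbl(w_q)) \cap F$. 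By Lemma~\ref{lem:annotate}(\ref{annotatelemma:a}), $L_t[q] = \lambda$ captures exactly the existence of such a $w_q$ of length $\lambda$ reaching $q$, which (combined with the definition of $\lambda$ as the minimal length) is the same as $w_q \in \sem{\Ac}(\Dc,s,t)$. So $S\cap F = S(\walk{t})$ as required.

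For the inductive step, I would assume the call \functionfont{Enumerate}($C, \ell, w, S(w)$) has been made for an internal node $w$ (i.e.\ $\ell > 0$, equivalently $\len(w) < \lambda$), with $u = \src(w)$, and trace the \keywordfont{while} loop. I need two things: (i) the edges $e_{min}$ selected across successive iterations of the loop are exactly the first edges of the children of $w$ in $\Tc$, enumerated in increasing $\tgtidx$ order and each exactly once; and (ii) for each such $e_{min}$, the set $S'$ passed to the recursive call equals $S(e_{min}\cdot w)$. For (ii), $S'$ is built in lines~\ref{enumerate:compute X start}--\ref{enumerate:compute X end} as $\bigcup_{p\in S(w),\,(e_{min},X)\in C_u[p]} set(X)$, which is precisely $\bigcup_{p\in S(w)} set(X_p)$ in the notation of Lemma~\ref{lem:c-tree} (with $X_p = \emptyset$ when no such pair exists), so Lemma~\ref{lem:c-tree} gives $S' = S(e_{min}\cdot w)$ directly — provided I first check $e_{min}\cdot w$ really is a node of $\Tc$, which follows because $S(e_{min}\cdot w)\neq\emptyset$ exactly when some $X_p\neq\emptyset$, i.e.\ exactly when $e_{min}$ was a candidate in line~\ref{enumerate:compute min start}. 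The distance bookkeeping is immediate: the recursive call passes $\ell - 1 = \lambda - \len(w) - 1 = \lambda - \len(e_{min}\cdot w)$. Conversely every child $e\cdot w$ of $\Tc$ has $S(e\cdot w)\neq\emptyset$, so by Lemma~\ref{lem:c-tree} some $X_p\neq\emptyset$, so $(e,X_p)\in C_u[p]$ for some $p\in S(w)$, so $e$ does get selected as $e_{min}$ in some iteration — no child is missed. The loop terminates exactly when all queues $C_u[p]$ for $p\in S(w)$ are exhausted (the $e_{min} = nil$ branch), at which point it restarts them (line~\ref{enumerate:restart}) so the structures are left in their original state, which is what lets the "exactly once per node" claim survive repeated visits to $u$ from different parents.

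The main obstacle I expect is part (i): arguing that the head-of-queue minimum-selection correctly partitions the multiset of pairs $\{(e,X) \in C_u[p] : p\in S(w)\}$ into the children of $w$, visiting each child's edge once and in order, without ever scanning past the head. The key facts are that each $C_u[p]$ is individually sorted in strictly increasing $\tgtidx$ order (Lemma~\ref{lem:trim}, item 2), and that after an iteration handling edge $e_{min}$, every queue whose head was $(e_{min}, \cdot)$ has been advanced while the others were not — so the new heads all have $\tgtidx > \tgtidx(e_{min})$, and the next iteration's minimum is the next child's edge. I would make this precise by maintaining, as an auxiliary invariant on the loop, that after $j$ iterations the current pointers of all $C_u[p]$ sit immediately past all pairs with first-edge $\tgtidx$ at most that of the $j$-th child, so the heads collectively expose exactly the $(j{+}1)$-th child's pairs (across those $p$ that have one) together with strictly larger edges. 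That invariant, plus Lemma~\ref{lem:c-tree} for the $S$-part, closes the induction; combined with the "all branches have length $\lambda$" remark after Definition~\ref{def:suffix-tree}, it also confirms that recursion bottoms out exactly at the leaves.
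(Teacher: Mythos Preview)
Your overall induction structure and base case match the paper's, and your use of Lemma~\ref{lem:c-tree} for the $S'$ computation is exactly right. However, there is a genuine gap in your treatment of the shared mutable queues.

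You correctly note that the restart on line~\ref{enumerate:restart} leaves the queues $C_u[p]$ in their original state, and you justify your loop invariant ``after $j$ iterations the current pointers of all $C_u[p]$ sit immediately past all pairs with first-edge $\tgtidx$ at most that of the $j$-th child'' by observing that restarts handle ``repeated visits to $u$ from different parents.'' But that addresses only the case where the two visits are in \emph{disjoint} subtrees of the call tree. The dangerous case is different: between iterations $j$ and $j{+}1$ of the \keywordfont{while} loop at node $w$, you make a recursive call on $e_{min}\cdot w$ (line~\ref{enumerate:enumerate}), and somewhere inside that recursive subtree you may reach a strict descendant $w'$ with $\src(w') = u$ again. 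At that moment the queues $C_u[p]$ for $p\in S(w)$ have been partially advanced by the \emph{still-active} ancestor call and have not yet been restarted. If $S(w)\cap S(w')\neq\emptyset$, the nested call would read a queue whose current pointer is not at the start, and your invariant would fail.

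The paper closes this gap with an auxiliary lemma (Lemma~\ref{lem:noconcurrency}): if $w'$ is a strict descendant of $w$ in $\Tc$ with $\src(w) = \src(w')$, then $S(w)\cap S(w') = \emptyset$. The proof is a short minimality argument: a state $q$ in the intersection would let you splice the prefix witnessing $q\in S(w')$ onto the suffix $w$, producing a matching walk strictly shorter than $\lambda$. With this lemma in hand, the paper states a ``no-concurrent-access'' claim---that each call reads only queues that are at their starting position, and that no ancestor touches the same queues---and your loop invariant then goes through. You need this ingredient (or an equivalent) before your induction step is sound.
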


\section{Complexity analysis}
\label{sec:complexity}

In Section~\ref{sec:time-complexity}, we show that the algorithm meets the complexity bounds claimed in Theorem~\ref{theorem:main} and in
Section~\ref{sec:space-complexity}, we discuss its memory usage.

\subsection{Time complexity of the algorithm}
\label{sec:time-complexity}

\paragraph{Annotate.} Creating and fully initializing a map $B_u$ for some $u$ takes time $\indeg(u) \times \card{Q}$. Thus, creating all the maps takes 
\begin{equation}
    \bigo{\sum_{u\in V} \indeg(u) \times \card{Q}} = \bigo{\card{E}\times \card{Q}}
\end{equation}
The remainder of the initialization is negligible.

The main loop (starting at line \ref{annotate:main loop}) is essentially a breadth-first traversal of~$\Dc\times\Ac$, hence it is not surprising that it runs in $\bigo{\card{E}\times\card{\Delta}}$. Indeed, each pair $(v,q)\in V\times Q$ is visited at most once. Then, for each such $(v,q)$, we visit each outgoing edge $e$ of $v$ (line~\ref{annotate:for(e)}), for which we then visit each outgoing transition of $q$ (line~\ref{annotate:for(p)}). Thus, the elementary instructions, at lines~\ref{annotate:inner loop start}-\ref{annotate:inner loop end}, are executed at most $\card{E}\times\card{\Delta}$ times. Thus, the total runtime of \functionfont{Annotate} is $\bigo{\card{E}\times \card{\Delta}}$.

\paragraph{Trim.} Creating and initializing a map $C_u$ for some $u$ takes time $\bigo{\card{Q}}$. Thus, creating all maps takes time $\bigo{\card{V}\times\card{Q}}$. The remainder of the function consists in three nested for-loops that simply visit each edge-state pair exactly once. Thus, the total runtime of \functionfont{Trim} is $\bigo{\card{E}\times\card{Q}}$.

\paragraph{Enumerate.} As shown in Lemma~\ref{lem:enumerate-tree}, \functionfont{Enumerate} performs a depth-first traversal of $\Tc$ and makes exactly one recursive call per node of the tree. Thus, a leaf of the tree is reached and an output is produced at most every $\lambda$ recursive calls. Within \functionfont{Enumerate}, the time between two consecutive recursive calls (excluding time spent in the recursive calls themselves) is dominated by the two nested for-loops (lines~\ref{enumerate:compute X start}-\ref{enumerate:compute X end}). Lemma~\ref{lem:trim} states that, for a fixed $e$, the size of $(e,X_p) \in C_u[p]$ is bounded by $\sum_{a\in\Sigma}\card{\Delta^{-1}(a,p)}$. Thus, browsing all such $X_p$ for all $p$ in $Q$ takes time $\bigo{\card{\Delta}}$. Hence, \functionfont{Enumerate} produces an output at most every $\bigo{\lambda\times\card{\Delta}}$ steps.

\paragraph{Total.} The preprocessing phase (\functionfont{Annotate} and \functionfont{Trim}) takes time $\bigo{\card{E}\times \card{\Delta}}$ and the delay between two consecutive outputs during \functionfont{Enumerate} is $\bigo{\lambda\times\card{\Delta}}$, which proves Theorem~\ref{theorem:main}.

\subsection{Memory usage}
\label{sec:space-complexity}

Sometimes, a polynomial delay algorithm might end up using exponential space.  Indeed, when the enumeration procedure is allowed to update the precomputed data structure, it could become arbitrarily large after arbitrarily many answers have been outputted. Our algorithm avoids this pitfall.

\begin{remark}
    Throughout the enumeration, the total memory usage of the algorithm never exceeds $\bigo{\card{E}\times\card{\Delta}}$. Remark that the space needed to store the walk of length $\lambda$ that is being outputted is also in $\bigo{\card{E}\times\card{\Delta}}$, and therefore negligible. 
\end{remark}

A stricter class of enumeration algorithms, called \emph{memoryless} \cite{Strozecki2021}, forbids any kind of modification to the precomputed structures during the enumeration. Formally, in a memoryless enumeration algorithm, the $(i+1)$-th output is computed directly from the $i$-th output and the precomputed data structures.

Our algorithm is \emph{not} memoryless, as the efficiency of \functionfont{Enumerate} hinges on reading $C_u$ in the order it was initially created. Thus, enumeration cannot be readily resumed from any given output. We can adapt \functionfont{Enumerate} to the memoryless framework as follows. Given an answer $w$, we perform a computation that is \emph{guided} by~$w$: instead of searching for the minimum edge at lines~\ref{enumerate:compute min start}-\ref{enumerate:compute min end}, we set $e_{min}$ to the last edge $e$ of $w$, advance all queues to the first $e'$ with $\tgtidx(e') \geq \tgtidx(e)$, remove the last edge of $w$ and do the recursive call. Once this computation reaches $\ell = 0$, we skip outputting $w$. Then, all queues have been set to the correct position, and the algorithm resumes as usual to produce the next output. 

Unfortunately, this guided execution costs more than the normal delay of our main algorithm: advancing all queues to match $e$ costs $\bigo{\card{Q}\times\indeg(u)}$, where $u = \tgt(e)$. This leads to an algorithm with memoryless delay in $\bigo{d\times\lambda\times\card{\Delta}}$, where $d$ is the maximum in-degree of $\Dc$. 

The $d$ factor can actually be avoided by using a more involved data structure to represent $C_u$ that allows resuming from a given $e$ in constant time. Formally, $C_u$ will be a copy of $A_u$ in which each cell also contains a pointer to the next non-empty cell. It can be computed in time $\bigo{\card{E}\times\card{Q}}$ as follows:

\begin{figure}[ht]
\begin{vmalgorithm}
    \Function{ResumableTrim}{Annotation $B$}
    \ForAll{$u\in V$}
        \State \parbox{\linewidth}{$C_u \gets$ new map $Q \to \set{0,\ldots,\indeg(u)-1}$}
        \Statex \hfill$\to List[Q\times \mathbb{N}]$
        \ForAll{$p\in Q$}
            \State $next \gets nil$
            \ForAll{$i\in\{\indeg(u)-1,\ldots,0\}$} \smashedcommentpar{Reverse order}
                \State $C_u[p][i] \gets (B_u[p][i],next)$
                \If{$B_u[p][i]$ is not empty}
                    \State $next \gets i$
                \EndIf
            \EndFor
        \EndFor
    \EndFor
    \State \Return {$C$}
\EndFunction
\end{vmalgorithm}
\end{figure}

Then we replace \functionfont{Enumerate} by a new function \functionfont{NextOutput} that acts as its counterpart in the memoryless setting: given the precomputed data structure $C$ and a previous output $w$, the function \functionfont{NextOutput}$(C,w)$ performs a guided run as described previously and then produces the next output. Writing this function poses no conceptual challenge, but requires some technical care while traversing $C$. Altogether, this leads to the following theorem:

\begin{theorem}
    \problemfont{Distinct Shortest Walks}$(\Dc,\Ac,s,t)$ can be enumerated with a \emph{memoryless} algorithm with a preprocessing time in $\bigo{\card{\Dc}\times\card{\Ac}}$ and a delay in $\bigo{\lambda\times \card{\Ac}}$, where~$\lambda$ is the length of a shortest walk.    
\end{theorem}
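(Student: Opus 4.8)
The plan is to check the two genuinely new ingredients that appear above — the \functionfont{ResumableTrim} routine and the routine \functionfont{NextOutput} that is to replace \functionfont{Enumerate} in the memoryless setting — and then to reuse, essentially verbatim, the correctness and complexity arguments already established in Sections~\ref{sec:algorithm} and~\ref{sec:time-complexity}. The only real work is making \functionfont{NextOutput} reconstruct, from the previous answer alone, exactly the traversal state that \functionfont{Enumerate} kept in its queues.

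\textbf{\functionfont{ResumableTrim}.} Its analysis mirrors that of \functionfont{Trim}: allocating each map $C_u$ costs $\bigo{\indeg(u)\times\card{Q}}$, and the inner double loop does $\bigo{1}$ work per cell (copying the head pointer of the immutable list $B_u[p][i]$ and possibly updating $\varname{next}$), so the total is $\bigo{\sum_{u}\indeg(u)\times\card{Q}}=\bigo{\card{E}\times\card{Q}}\subseteq\bigo{\card{\Dc}\times\card{\Ac}}$. A straightforward induction on the decreasing loop variable shows that after the call $C_u[p][i]=\bigl(B_u[p][i],\,j\bigr)$, where $j$ is the least index strictly greater than $i$ with $B_u[p][j]\neq\emptyset$ (or $nil$ if there is none). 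Combined with Lemma~\ref{lem:annotate}, item~\ref{annotatelemma:c}, this yields the counterpart of Lemma~\ref{lem:trim}: for a fixed $p$, the nonempty cells $C_u[p][\tgtidx(e)]$ are exactly the pairs \functionfont{Trim} would enqueue in $C_u[p]$, they are chained in increasing $\tgtidx$ order by the $\varname{next}$ fields, and each list has size at most $\sum_{a\in\Sigma}\card{\Delta^{-1}(a,p)}$. Moreover, at the end of the inner loop $\varname{next}$ holds the least $i$ with $B_u[p][i]\neq\emptyset$, so the first nonempty cell of $C_u[p]$ is located in $\bigo{1}$ as ``cell $0$ if nonempty, otherwise its $\varname{next}$'' (valid whenever $\indeg(u)\geq1$, which always holds at a non-leaf node of $\Tc$).

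\textbf{\functionfont{NextOutput}$(C,w)$.} Its specification is to return the answer that \functionfont{Enumerate} produces immediately after $w$; it reads only the precomputed data ($C$, $L_t$, $\lambda$) and never writes to it. I would implement it in two phases. Phase~1 (\emph{guided descent}) walks down the branch of $\Tc$ ending at the leaf $w$, visiting the suffixes $w_0=\walk{t},w_1,\dots,w_\lambda=w$ of $w$ in order while maintaining $S_i=S(w_i)$, starting from $S_0=\{q\in F\mid L_t[q]=\lambda\}$ as in \functionfont{Main}. At step $i$, with $u=\src(w_i)$ and $e$ the first edge of $w_{i+1}$ read off $w$ (so $e\in\incoming(u)$), it uses the $\varname{next}$ chaining to reach $C_u[p][\tgtidx(e)]$ in $\bigo{1}$ for each $p\in S_i$ and sets $S_{i+1}=\bigcup_{p\in S_i}set(X_p)$, with $X_p$ the first component of that cell (or $\emptyset$). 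This is precisely the computation of Lemma~\ref{lem:c-tree}, whose right-hand side depends only on $S(w_i)$ and the cells indexed by $\tgtidx(e)$ — never on a queue position — so the $S_i$ built here equal the $S(w_i)$ of the original traversal (Lemma~\ref{lem:enumerate-tree}); during the descent we push the triples $(u,S_i,e)$ onto a stack, using $\bigo{\lambda\times\card{Q}}$ scratch space, which is negligible. Phase~2 (\emph{backtrack and re-descend}) is the ordinary depth-first successor on $\Tc$: pop $(u,S,e)$ and, via the $\varname{next}$ field stored in $C_u[p][\tgtidx(e)]$ for $p\in S$, compute in $\bigo{\card{Q}}$ the least $\tgtidx$, call it $j$, of a nonempty cell strictly after $\tgtidx(e)$ over those queues; if none exists, pop again (if the stack empties, $w$ was the last answer and \functionfont{NextOutput} reports termination); otherwise the edge $e_{min}$ with $\tgtidx(e_{min})=j$ and $\tgt(e_{min})=u$ is the next sibling, its state set is $\bigcup_{p\in S,\,C_u[p][j]\neq\emptyset}set(C_u[p][j])$ by Lemma~\ref{lem:c-tree}, and we then descend to the leftmost leaf of that subtree — at each node taking the minimum-$\tgtidx$ nonempty cell over the current state set, found in $\bigo{\card{Q}}$ as above — and output it. That a nonempty cell after $\tgtidx(e)$ exists precisely when $w_i$ has a further child in $\Tc$ follows from Lemma~\ref{lem:enumerate-tree} together with the pseudocode of \functionfont{Enumerate} (it recurses on $e_{min}$ iff some $C_u[p][\tgtidx(e_{min})]$ is nonempty for $p\in S$). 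Hence consecutive calls output the leaves of $\Tc$ left to right, i.e. $\sem{\Ac}(\Dc,s,t)$ without duplicates.

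\textbf{Complexity and conclusion.} Phase~1 has $\lambda$ steps, each costing $\bigo{\card{Q}}$ for the pointer lookups plus $\bigo{\sum_{p\in S_i}\card{X_p}}$ to build $S_{i+1}$; since $\sum_{p\in Q}\sum_{a\in\Sigma}\card{\Delta^{-1}(a,p)}=\card{\Delta}$, each step is $\bigo{\card{Q}+\card{\Delta}}\subseteq\bigo{\card{\Ac}}$, so Phase~1 is $\bigo{\lambda\times\card{\Ac}}$. Phase~2 pops at most $\lambda$ frames at $\bigo{\card{Q}}$ each and re-descends through at most $\lambda$ nodes at $\bigo{\card{Q}+\card{\Delta}}$ each, hence is also $\bigo{\lambda\times\card{\Ac}}$, giving the claimed delay. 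The preprocessing is \functionfont{Annotate} followed by \functionfont{ResumableTrim}, i.e. $\bigo{\card{E}\times\card{\Delta}}+\bigo{\card{E}\times\card{Q}}=\bigo{\card{\Dc}\times\card{\Ac}}$. Finally, the algorithm is \emph{memoryless}: the only information kept between outputs is the previous output $w$, and \functionfont{NextOutput}$(C,w)$ recomputes everything it needs from $w$ and the read-only precomputed structures. I expect the main obstacle to be purely a matter of bookkeeping: making the array-and-pointer navigation in \functionfont{NextOutput} faithfully reproduce the order in which \functionfont{Enumerate} visits children — correctly identifying ``next sibling'' and ``first child'' in $\bigo{1}$ per state, and matching each reconstructed $S$-set to the $S(\cdot)$ of the original traversal — together with the boundary cases ($\lambda=0$, and detecting that $w$ is the last leaf of $\Tc$).
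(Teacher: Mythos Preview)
Your proposal is correct and follows essentially the same approach as the paper, which only sketches the argument: it introduces \functionfont{ResumableTrim} explicitly, then says that \functionfont{NextOutput} ``performs a guided run as described previously and then produces the next output'' and that ``writing this function poses no conceptual challenge, but requires some technical care while traversing $C$.'' Your Phase~1/Phase~2 decomposition is exactly this guided-run-then-resume scheme spelled out, with the explicit stack of triples $(u,S_i,e)$ playing the role of the recursion stack, and your use of the $\varname{next}$ pointers to locate the next nonempty cell in $\bigo{1}$ per state is precisely what \functionfont{ResumableTrim} was designed to enable; one cosmetic remark is that in Phase~1 you do not need the chaining to \emph{reach} $C_u[p][\tgtidx(e)]$ --- that is a direct array access --- the chaining is only needed in Phase~2 to find the next sibling.
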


\section{Extensions}
\label{sec:extensions}

\subsection{Handling spontaneous transitions}
\label{sec:epsilon}

Note that Definition~\ref{def:automaton} disallows $\varepsilon$-transitions. An automaton $\Ac = \aut{\Sigma,Q,\Delta,I,F}$ allows $\varepsilon$-transitions when $\Delta\subseteq Q \times (\Sigma \cup \set{\varepsilon}) \times Q$.

Handling spontaneous transitions requires some light editing of \functionfont{Annotate} and does not impact the other subfunctions. It essentially consists in eliminating $\varepsilon$-transitions \emph{on the fly}: each time a new state $p$ is reached at a vertex $u$, we also add each state $r$ that can be reached from $p$ by following one or more $\varepsilon$-transitions. Formally, it amounts to replacing the innermost loop of \functionfont{Annotate} (lines~\ref{annotate:inner loop start}--\ref{annotate:inner loop end}) with a call to $\textsc{PossiblyVisit}(u, p, e)$, given below.

\begin{figure}[ht]
\begin{vmalgorithm}
\Function{PossiblyVisit}{Vertex~$u\in V$, State~$p \in Q$, Edge~$e \in E$}
    \LComment{We use $\varname{next}$, $\varname{stop}$, $L_u$, $B_u$, $\ell$, $t$ and $\Ac$ from \textsc{Annotate} as global variables}
\If {$p \notin \dom(L_u)$}\label{visit:if p}
     \smashedcommentpar{First time state $p$ is reached at vertex $u$}
     \State $L_u[p] \gets \ell$
     \State add $(u,p)$ to $\varname{next}$
     \If{ $u=t$ ~\textbf{and}~ $p\in F$}                   \smashedcommentpar{First time a final state is reached\\at vertex $t$}
         \State $\varname{stop}\gets \top$ 
     \EndIf
     \State add $q$ to $B_u[p][\tgtidx(e)]$
     \ForAll{$r\in\Delta(p,\varepsilon)$}\label{visit:for each r}
        \State PossiblyVisit($u$, $r$, $e$)
     \EndFor
 \ElsIf {$L_u[p] = \ell$}
     \smashedcommentpar{We found another walk of length $\ell$\\ that reaches state $p$ at vertex~$u$}
     \Statex
     \State add $q$ to $B_u[p][\tgtidx(e)]$ 
 \EndIf
\EndFunction
\end{vmalgorithm}
\label{a:possiblyvisit}
\end{figure}

The test on line \ref{visit:if p} will be true at most once per pair $(u,p)\in Q \times V$. Thus, at the end \functionfont{Annotate}, there will have been at most 
$\card{V} \times \card{\Delta_\varepsilon}$ laps in the for-loop on line \ref{visit:for each r}, where~$\Delta_\varepsilon$ is the set of spontaneous 
transitions in~$\Ac$. Therefore, this modification does not change the time complexity of \functionfont{Annotate}.

\subsection{Query given as a regular expression}

In real-life scenarios, the query is usually given in some query language that is closer to a regular expression than an automaton. For every regular expression $R$, we define $\sem{R}(\Dc,s,t)$ to be equal to $\sem{\Ac}(\Dc,s,t)$ for any automaton $\Ac$ that accepts the language described by $R$. We can now formally define the corresponding computational problem and establish the corresponding complexity bounds.

\begin{problem}{Distinct Shortest Walks (Regexp variant)}
    \item[Inputs:] A multi-labeled multi-edge database~$\Dc$, and two vertices~$s,t$ in~$\Dc$.
    \item[Query:] A regular expression~$R$.
    \item[Output:] Enumerate~$\sem{R}(D,s,t)$, without duplicates.
\end{problem}

The usual approach for handling a regular expression $R$ consists in translating it first to some equivalent automaton $\Ac$. We know from Section~\ref{sec:epsilon} that our algorithm works on automata with \mbox{$\varepsilon$-transitions} at no additional cost. Thus, we can readily use Thompson construction (Theorem~\ref{theorem:thompson} below) which, combined with Theorem~\ref{theorem:main}, immediately leads to Corollary~\ref{corollary:main-regexp}, stated afterwards.

\begin{theorem}[\citeauthor{Thompson1968}, \citeyear{Thompson1968}]\label{theorem:thompson}
    Given a regular expression~$R$, there is an algorithm that runs in time $\bigo{\card{R}}$
    and build an equivalent automaton with \mbox{$\varepsilon$-transitions} $\Ac$ with $\bigo{\card{R}}$ states and $\bigo{\card{R}}$ transitions in total.
\end{theorem}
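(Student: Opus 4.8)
The plan is to prove Theorem~\ref{theorem:thompson} by structural induction on~$R$, following Thompson's classical construction. I maintain the invariant that for every subexpression~$R'$ the procedure returns a fragment~$\Ac_{R'}$ — an automaton with $\varepsilon$-transitions over~$\Sigma$ — that has a \emph{single} initial state~$i_{R'}$ and a \emph{single} final state~$f_{R'}$ with $i_{R'}\neq f_{R'}$, such that no transition enters~$i_{R'}$ and none leaves~$f_{R'}$, and such that $L(\Ac_{R'})$ is exactly the language denoted by~$R'$. This ``isolated endpoints'' property is the crux: it is precisely what lets each composition step glue sub-fragments together using only a constant number of fresh states and $\varepsilon$-transitions, without ever duplicating a sub-fragment, and hence what makes the final size bounds linear. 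For the base cases: if $R'=\varepsilon$, return two states with a single transition $i_{R'}\xrightarrow{\varepsilon}f_{R'}$; if $R'$ is a symbol $a\in\Sigma$, return two states with $i_{R'}\xrightarrow{a}f_{R'}$; and, if the grammar of regular expressions includes~$\emptyset$, return two states with no transition at all.

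For the inductive step, let $R'$ have immediate subexpressions $R_1$ (and possibly $R_2$), with fragments $\Ac_{R_1}$, $\Ac_{R_2}$ already built and taken to be state-disjoint. For \emph{concatenation} $R'=R_1R_2$: keep both fragments, add the single transition $f_{R_1}\xrightarrow{\varepsilon}i_{R_2}$, and set $i_{R'}=i_{R_1}$, $f_{R'}=f_{R_2}$. For \emph{union} $R'=R_1+R_2$: add two fresh states $i_{R'},f_{R'}$ and the four transitions $i_{R'}\xrightarrow{\varepsilon}i_{R_1}$, $i_{R'}\xrightarrow{\varepsilon}i_{R_2}$, $f_{R_1}\xrightarrow{\varepsilon}f_{R'}$, $f_{R_2}\xrightarrow{\varepsilon}f_{R'}$. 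For \emph{Kleene star} $R'=R_1^{*}$: add two fresh states $i_{R'},f_{R'}$ and the four transitions $i_{R'}\xrightarrow{\varepsilon}i_{R_1}$, $i_{R'}\xrightarrow{\varepsilon}f_{R'}$, $f_{R_1}\xrightarrow{\varepsilon}i_{R_1}$, $f_{R_1}\xrightarrow{\varepsilon}f_{R'}$. In each case the invariant is preserved — the designated endpoints are distinct and suitably isolated (the fresh states by construction; in the concatenation case nothing new enters $i_{R_1}$ or leaves $f_{R_2}$) — and a routine inspection of accepting runs shows $L(\Ac_{R'})$ equals the language denoted by~$R'$. Crucially, every case allocates at most two new states and at most four new transitions.

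For the complexity, I associate to~$R$ its parse tree, which has $\bigo{\card{R}}$ nodes (one per occurrence of a symbol or operator) and is computable in time $\bigo{\card{R}}$ by a standard parser. The construction above traverses the parse tree once, performing $\bigo{1}$ work at each node (allocating a bounded number of states and transitions and recording pointers to the two endpoints of the child fragments). Summing over the $\bigo{\card{R}}$ nodes yields $\bigo{\card{R}}$ states and $\bigo{\card{R}}$ transitions in total and $\bigo{\card{R}}$ running time overall, and the fragment returned at the root satisfies $L(\Ac)=L(R)$ by the induction. There is no deep obstacle here; the only genuinely delicate points are (i) ensuring the counts stay linear rather than blowing up — which is exactly why the isolated-endpoints invariant must be kept so that fragments are \emph{reused} and glued with constantly many $\varepsilon$-transitions, never copied — and (ii) being explicit about which constructors the grammar of regular expressions admits (in particular whether $\emptyset$ is permitted) so that every base and inductive case is covered. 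The language-equivalence half of the induction is routine bookkeeping on runs and may be stated without full detail or deferred as folklore.
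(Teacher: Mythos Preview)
Your proof is a correct and standard rendition of Thompson's construction; the invariant you maintain (single initial and final state, no transitions into the initial state or out of the final state) is exactly what makes the gluing steps constant-cost, and your size and time accounting is sound.

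That said, the paper does not actually prove this theorem: it is stated as a classical result due to Thompson (1968) and used as a black box to derive Corollary~\ref{corollary:main-regexp}. So there is no ``paper's own proof'' to compare against here --- you have supplied a proof where the paper simply cites one. Your write-up is essentially the textbook argument and would be appropriate if the paper wanted to be self-contained on this point.
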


\begin{corollary}\label{corollary:main-regexp}
    When the query is given as a regular expression $R$, \problemfont{Distinct Shortest Walks} can be enumerated with a preprocessing time in $\bigo{\card{R}\times\card{\Dc}}$ and a delay in $\bigo{\lambda \times\card{R}}$, where~$\lambda$ is the length of a shortest walk.
\end{corollary}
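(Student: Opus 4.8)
\textbf{Proof proposal for Corollary~\ref{corollary:main-regexp}.}
The plan is to obtain the corollary by straightforward composition: translate the input regular expression $R$ into an equivalent automaton, then run the main algorithm (in its $\varepsilon$-transition-aware variant from Section~\ref{sec:epsilon}) on that automaton. First I would apply Theorem~\ref{theorem:thompson} to build, in time $\bigo{\card{R}}$, an automaton $\Ac$ with $\varepsilon$-transitions such that $L(\Ac)$ is the language described by $R$, with $\bigo{\card{R}}$ states and $\bigo{\card{R}}$ transitions; hence $\card{\Ac}\in\bigo{\card{R}}$. By definition of $\sem{R}(\Dc,s,t)$ (which is $\sem{\Ac}(\Dc,s,t)$ for any automaton $\Ac$ accepting the language of $R$), enumerating $\sem{\Ac}(\Dc,s,t)$ without duplicates is exactly the required output, so correctness is immediate.

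Next I would invoke Theorem~\ref{theorem:main} together with the extension of Section~\ref{sec:epsilon}, which guarantees that \problemfont{Distinct Shortest Walks} on an automaton with $\varepsilon$-transitions can be enumerated with preprocessing time $\bigo{\card{\Dc}\times\card{\Ac}}$ and delay $\bigo{\lambda\times\card{\Ac}}$, the same bounds as in the $\varepsilon$-free case. Substituting $\card{\Ac}\in\bigo{\card{R}}$ yields preprocessing $\bigo{\card{\Dc}\times\card{R}}$ and delay $\bigo{\lambda\times\card{R}}$. The $\bigo{\card{R}}$ cost of Thompson's construction is absorbed into the preprocessing phase (it is dominated by $\bigo{\card{\Dc}\times\card{R}}$), so the overall bounds are as claimed.

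There is essentially no deep obstacle here; the corollary is a composition of two already-established results. The only point requiring care — and the one I would state explicitly — is that Thompson's construction genuinely produces spontaneous transitions, so the bound relies on the fact, argued in Section~\ref{sec:epsilon}, that the $\varepsilon$-elimination performed on the fly inside \functionfont{Annotate} contributes only an additional $\card{V}\times\card{\Delta_\varepsilon}$ laps and therefore does not change the asymptotic complexity of \functionfont{Annotate} (and leaves \functionfont{Trim} and \functionfont{Enumerate} untouched). Once this is noted, the complexity arithmetic is routine and the proof is complete. \qed
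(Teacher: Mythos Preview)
Your proposal is correct and follows essentially the same approach as the paper: apply Thompson's construction (Theorem~\ref{theorem:thompson}) to obtain an $\varepsilon$-NFA of size $\bigo{\card{R}}$, invoke the $\varepsilon$-aware variant of the main algorithm from Section~\ref{sec:epsilon} together with Theorem~\ref{theorem:main}, and substitute $\card{\Ac}\in\bigo{\card{R}}$ into the bounds. The paper states this composition in the paragraph preceding the corollary without further detail; your version is slightly more explicit (absorbing the $\bigo{\card{R}}$ construction cost, recalling why the on-the-fly $\varepsilon$-elimination preserves the complexity), but the argument is identical.
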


    A more common translation from regular expressions to automata consists in using Glushkov construction \cite{BrugemannKlein93}.
        The produced NFA has no $\varepsilon$-transitions, but may have up to
    $\bigo{\card[2]{R}}$ transitions. In our case, this would yield weaker complexity bounds: $\bigo{\card[2]{R}\times\card{\Dc}}$ preprocessing time and $\bigo{\lambda \times \card[2]{R}}$ delay.

\subsection{Adaptation to related problems}

In this section, we discuss several problems that are related to \problemfont{Distinct Shortest Walks} and briefly explain how our algorithm can be adapted to address them.

\paragraph{One source to many targets.} In \problemfont{Distinct Shortest Walks}, both source and target vertices $s$ and $t$ are given as part of the input. Another version of the problem only fixes $s$, and then asks for shortest walks from $s$ to $t$ for a subset of (or possibly all) the vertices $t$ in $\Dc$. This problem can be solved at no additional cost: start by running \functionfont{Annotate} and add wanted targets to a queue the first time they are annotated with a final state of $\Ac$. This step stops when no new pair $(v,q)$ can be discovered, and thus has the same worst-case complexity as the main algorithm. Then, for each queued target, collect the set of final states that have been reached with a walk of minimal length, and run the enumeration step as usual.

\paragraph{Distinct Cheapest Walks.} In this scenario, in addition to their labels, edges of $\Dc$ carry a positive value, called \emph{cost}. Instead of looking for shortest walks from $s$ to $t$, this problem asks for \emph{cheapest} walks, that is, walks that minimize the sum of costs along their edges. Our algorithm can be easily adapted to this setting by replacing the breadth-first traversal in \functionfont{Annotate} with a cheapest-first traversal, as is done in Dijkstra's algorithm. In that case, the preprocessing time complexity becomes 
 \begin{equation*}
     \bigo{\card{\Dc}\times\card{\Ac} + \card{V}\times\card{Q}\times\big(\log(\card{V})+\log(\card{Q})\big )}
 \end{equation*}
 using standard techniques (\cite{Fredman1984}) and the delay is unchanged.

\paragraph{Shortest Walks with Multiplicities.} This version of the problem asks to return all shortest walks $w$ from $s$ to $t$ together with their \emph{multiplicity}, that is, the number of different accepting runs of $\Ac$ over $\lbl(w)$. Theoretically, one could rerun $\Ac$ on $w$ when it is output, and simply count the runs. Indeed, this would cost $\bigo{\lambda \times \card{\Ac}}$, and would not change the delay. That being said, our algorithm essentially runs $\Ac$ over $w$ along the recursive calls to \functionfont{Enumerate}. Hence, it can be easily adapted to keep track of the number of times each state has been produced along the walk.

\section{Perspectives}
\label{sec:perspectives}

We have proposed an algorithm for solving \problemfont{Distinct Shortest Walks} that achieves a  $\bigo{\lambda\times\card{\Ac}}$ delay after a preprocessing time in $\bigo{\card{\Dc}\times\card{\Ac}}$. In this final section, we briefly discuss lower bounds and potential leads to improve our upper bounds.

%
It is unlikely that the preprocessing time of our algorithm can be improved by a polynomial factor.
Indeed, several results \cite{EquiMakinen+2023, BackursIndyk2016} show that, under the Strong Exponential Time Hypothesis (SETH), deciding whether a word $w$ matches a regular expression~$R$ cannot be done in $\bigo{\card[1-\varepsilon]{w}\times\card{R}}$
nor $\bigo{\card{w}\times\card[1-\varepsilon]{R}}$ for any~$\varepsilon>0$.
Deciding whether \problemfont{Distinct Shortest Walk} has at least one output subsumes this problem. In other words, under SETH, the preprocessing time of \problemfont{Distinct Shortest Walk} cannot belong to $\bigo{\card[1-\varepsilon]{\Dc}\times\card{R}}$ nor  $\bigo{\card{\Dc}\times\card[1-\varepsilon]{R}}$, for any~$\varepsilon>0$.
However, it might be possible to reduce it by a polylogarithmic factor. Indeed, Myers \cite{Myers1992} gave an algorithm in $\bigo{\frac{\card{R}\times\card{w}}{\log \card{w}}+\card{w}}$ for the former problem. It was later improved by \citeauthor{BilleThorup2009} \cite{BilleThorup2009} to run in $\bigo{\frac{\card{R}\times\card{w}}{\log^{1.5} \card{w}}+\card{w}}$.
These results provide interesting leads for improving our algorithm.


A significant part in the delay comes from the time taken to actually write down the output in full. However, it is likely that most walks have large parts in common, especially if the set of answers is larger than the size of the database. In that case, one may significantly decrease the delay by 
outputting only the difference with the previous output.  In that case, the order in which the walks are produced is crucial. In a recent article \cite{AmariliMonet2023}, Amarilli and Monet showed how to find efficient orders for enumerating a regular language given as an automaton. Using similar techniques might allow to significantly reduce the (amortized) delay of our algorithm.




\bibliographystyle{ACM-Reference-Format}
\bibliography{bibliography.bib, languages.bib}


\ifincludeappendix
    \appendix\onecolumn\Large
\newcounter{appendix}\setcounter{appendix}{0}
\renewcommand{\theappendix}{\Alph{appendix}}
\renewcommand{\thesection}{\theappendix.\arabic{section}}
\renewcommand{\thetheorem}{\theappendix.\arabic{theorem}}
\renewcommand{\theproposition}{\theappendix.\arabic{proposition}}
\renewcommand{\thecorollary}{\theappendix.\arabic{corollary}}
\renewcommand{\thelemma}{\theappendix.\arabic{lemma}}
\renewcommand{\thedefinition}{\theappendix.\arabic{definition}}

\newcommand{\newappendix}[1]{%
  \refstepcounter{appendix}%
  \setcounter{section}{0}%
  \clearpage%
  {\centering%
  \Huge\textbf{Appendix \theappendix:\quad #1}}%
  \vspace*{1cm}
}

\newappendix{Reduction from \problemfont{Distinct Shortest Walks} to \problemfont{All Shortest Words}}
\label{appendix:reduction-to-ackerman-shallit}

\newcommand{\minarrow}{{\normalfont\textsf{MinArrow}}}
\newcommand{\comp}{{\normalfont\textsf{Comp}}}
\newcommand{\lex}{\mathrel{\leq_{LEX}}}

Appendix~\theappendix{} shows how existing work can be used to solve \problemfont{Distinct Shortest Walks}, albeit with a worse complexity than what we achieve in this paper. 
More precisely, we explain how the problem reduces to \problemfont{All Shortest Words}, a tasks which consists in enumerating all words of minimal length that are accepted by a given NFA in lexicographic order.

Formally, \problemfont{All Shortest Words} is defined as follows:

\begin{problem}{All Shortest Words}
    \item[Input:] A nondeterministic automaton~$\Ac=\aut{\Sigma,Q,\Delta,I,F}$.
    \item[Output:] Enumerate all shortest words in~$L(\Ac)$, without duplicates and in lexicographic order.
\end{problem}

The best known algorithm to solve \problemfont{All Shortest Words} is given in \cite{AckermanShallit2009}. However, \citeauthor{AckermanShallit2009} did not explicitly write their algorithm in the enumeration complexity framework. This is done in the techincal report \cite{FrancisMarsault2023}, which proves the following theorem:

\begin{theorem}
    \label{t:ackerman-shallit}
    \problemfont{All Shortest Words} can be enumerated with $\bigo{\lambda\times\card{\Delta} + \lambda\times\card[2]{Q}}$ preprocessing and $\bigo{\lambda\times\card{\Delta}}$ delay, where $\lambda$ is the length of any shortest word in $L(\Ac)$.
\end{theorem}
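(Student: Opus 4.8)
The plan is to reformulate the algorithm of Ackermann and Shallit as a depth-first traversal of a layered tree of \emph{viable prefixes} and to carry out its analysis directly in the enumeration-complexity framework. Assume throughout that $L(\Ac)\neq\emptyset$, so that $\lambda$ is well defined; otherwise there is nothing to output. For the preprocessing, first compute $\lambda$ as the distance from $I$ to $F$ in the underlying directed graph of $\Ac$ (obtained by forgetting transition labels), by a breadth-first search in time $\bigo{\card{Q}+\card{\Delta}}$. Then, by a backward pass over the $\lambda+1$ layers, compute for each $i\in\set{0,\ldots,\lambda}$ the set
\begin{equation*}
    G_i \;=\; \setst{q\in Q}{\exists v\in\Sigma^{\lambda-i},\ \Delta(q,v)\cap F\neq\emptyset}
\end{equation*}
of states from which a final state is reachable by reading \emph{exactly} $\lambda-i$ symbols. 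These layers are obtained bottom-up from $G_\lambda=F$ using $G_i=\setst{q}{\exists a\in\Sigma,\ \Delta(q,a)\cap G_{i+1}\neq\emptyset}$, each $G_i$ stored as a subset of $Q$. Accounting precisely for this pass together with the auxiliary per-layer data that the enumeration will rely on is where the $\bigo{\lambda\times\card[2]{Q}}$ summand of the preprocessing bound originates, the transition-scanning work contributing the remaining $\bigo{\lambda\times\card{\Delta}}$.

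Next I would introduce the tree $\Tc$ whose nodes at depth $i$ are the words $u\in\Sigma^i$ with $\Delta(I,u)\cap G_i\neq\emptyset$, rooted at $\varepsilon$, the children of a node $u$ being the words $ua$ that are themselves nodes, ordered by the symbol $a$. The key observation is that whenever $q$ has an $a$-transition to some $q'\in G_{j+1}$ then $q\in G_j$; from it one derives by induction on $i$ the invariant that, for a node $u$ at depth $i$ with $S_i:=\Delta(I,u)\cap G_i$, (i)~if $i<\lambda$ then $u$ has at least one child, and (ii)~the state set of any child $ua$ equals $\Delta(S_i,a)\cap G_{i+1}$. It follows that the leaves of $\Tc$ are exactly the words of $L(\Ac)$ of length $\lambda$, each labelling a \emph{single} root-to-leaf path --- nodes carry sets of states, regardless of how many accepting runs a word admits --- and that every internal node is an ancestor of some leaf.

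The enumeration is then a depth-first traversal of $\Tc$, maintaining along the current branch the prefix, its depth $i$, the set $S_i$, and, per layer, a pointer recording which symbols have already been tried. Descending from a node with state set $S_i$ means scanning the symbols that label transitions out of $S_i$ in increasing order and keeping the first $a$ for which $\Delta(S_i,a)\cap G_{i+1}\neq\emptyset$; backtracking resumes this scan where it was left off. Because children are examined in symbol order the leaves are output in lexicographic order, and because nodes carry sets no word is output twice. Since $\Tc$ has height $\lambda$ and every internal node has a child, the traversal performs $\bigo{\lambda}$ node entries and exits between two consecutive leaves; moreover, over the whole run, the symbol-scanning work charged to a fixed node touches each of its transitions at most once, so advancing to its next viable child costs $\bigo{\card{\Delta}}$ amortized, including the intersections with $G_{i+1}$. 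This yields the announced $\bigo{\lambda\times\card{\Delta}}$ delay, with the first output produced within the preprocessing budget plus one root-to-leaf descent.

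The main obstacle is the pair of invariants (i)--(ii) --- in particular that no viable prefix is a dead end and that a child's state set is recoverable from the parent's --- together with the amortized implementation of ``advance to the next viable child'': this requires the transitions out of each state to be enumerable in symbol order and the sets $S_i$ and $G_{i+1}$ to support constant-time membership tests, and it is precisely this requirement that fixes which auxiliary structures must be built during preprocessing, hence the $\lambda\times\card[2]{Q}$ term. By contrast, distinctness and the lexicographic ordering come for free once the set-based tree $\Tc$ has been set up correctly.
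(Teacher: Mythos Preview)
The paper does not prove this theorem at all: it is quoted from the technical report \cite{FrancisMarsault2023}, which repackages the algorithm of Ackerman and Shallit \cite{AckermanShallit2009} in the enumeration-complexity framework. So there is no in-paper proof to compare against; your sketch is effectively a reconstruction of what that external reference does, and the layered-sets-plus-DFS shape you describe is indeed the standard approach.

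Your argument is sound on the enumeration side: the invariants (i)--(ii) are correct, the tree $\Tc$ has no dead branches, and the $\bigo{\lambda\times\card{\Delta}}$ delay follows once you observe that finding the next child of a node and computing its state set never costs more than one full scan of the transitions out of $S_i$ plus an $\bigo{\card{Q}}$ set reset. One quibble: you write ``amortized'' for the per-child cost, but the delay bound in the theorem is worst-case; you should make explicit that a single advance step is also bounded by $\bigo{\card{\Delta}}$ in the worst case, not merely on average.

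The genuine gap is the $\lambda\times\card[2]{Q}$ term in the preprocessing. You assert twice that it arises from ``auxiliary per-layer data'' needed for constant-time membership in $G_{i+1}$ and symbol-ordered traversal, but you never name a structure that actually costs $\card[2]{Q}$ per layer. Storing each $G_i$ as a characteristic array costs $\bigo{\card{Q}}$ per layer; computing $G_i$ from $G_{i+1}$ by scanning transitions costs $\bigo{\card{\Delta}}$ per layer; sorting each state's outgoing transitions by symbol is a one-off $\bigo{\card{\Delta}\log\card{\Sigma}}$ and is not per layer. As written, your preprocessing is $\bigo{\lambda\times(\card{Q}+\card{\Delta})}$, which is \emph{tighter} than the stated bound. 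Either you are missing a data structure that the cited algorithm genuinely needs --- in which case you must identify it --- or your reconstruction improves on the quoted bound, in which case you should say so rather than reverse-engineer a $\card[2]{Q}$ cost that your own construction does not incur.
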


We show how \problemfont{Distinct Shortest Walks} reduces to \problemfont{All Shortest Words}. This reduction follows very closely the ideas in \cite{MartensTrautner2018}. However, the authors were only interested in showing that the problem can be enumerated with polynomial delay, and hence did not look at the fine-grained complexity. Moreover, our data model slightly differs from theirs. Thus, for the sake of completeness, we chose to rewrite their proof directly on our model, and then give the corresponding complexity. This leads to Theorem~\ref{t:Martens-Trautner} given in the introduction:

\begin{falsestatement}{Theorem~\ref{t:Martens-Trautner}}[\citeauthor{MartensTrautner2018}, \citeyear{MartensTrautner2018}]
Given a nondeterministic automaton $\Ac$ with set of states $Q$ and transition table $\Delta$ and a database $\Dc$ with set of vertices $V$, \problemfont{Distinct Shortest Walks}($\Dc$,$\Ac$) can be enumerated with delay in $\bigo{\card{\Dc}\times\card{\Delta}
\times\lambda}$ after a preprocessing in $\bigo{\card[2]{Q}\times \card[2]{V}\times\lambda+\card{\Delta}\times\card{\Dc}\times\lambda}$, where $\lambda$ is the length of a shortest walk.
\end{falsestatement}

Let $\Dc = (\Sigma, V, E,\src,\tgt,\lbl)$ be a database, $\Ac=\aut{\Sigma,Q,\Delta,I,F}$ an automaton, and $s,t$ two vertices in $\Dc$.

We define a new automaton $\Ac' = \aut{\Sigma',Q',\Delta',I',F'}$ as follows:
\begin{itemize}
    \item $\Sigma' = E$;
    \item $Q' = V\times Q$;
    \item $\Delta' = \left \{(v_1,q_1),e,(v_2,q_2) \ \middle | \
        \begin{array}{l}
            \src(e) = v_1, \tgt(e) = v_2 \\
            \exists a \in \lbl(e), (q_1,a,q_2)\in\Delta
        \end{array}
    \right \}$

    \item $I' = \set{s}\times I$
    \item $F' = \set{t}\times F$
\end{itemize}

We conclude the reduction by remarking that there is a one-to-one mapping from words in $L(\Ac')$ to walks from $s$ to $t$ that match $\Ac$. After each output of the algorithm for \problemfont{All Shortest Words}, we need to reconstruct the corresponding path. This is done in time $\bigo{\lambda}$ by simply retrieving the source and target vertices of each edge. Hence, this is negligible when compared to the delay, which is $\bigo{\lambda\times\card{Delta'}} = \bigo{\lambda\times\card{E}\times\card{\Delta}}$.

Constructing $\Ac'$ takes time $\bigo{\card{V}\times\card{Q} + \card{\Delta}\times\card{E}}$. Again, this is negligible when compared to the delay of \problemfont{All Shortest Words}, in $\bigo{\lambda\times\Delta' + \lambda\times\card[2]{Q'}}$. Altogether, the delay is in $\bigo{\lambda\times\card{\Delta}\times\card{E} + \lambda\times\card[2]{V}\times\card[2]{Q}}$.

\newappendix{Proofs}
\label{appendix:proofs}

Appendix~\ref{appendix:proofs} contains the proofs of all lemmas in the body of the article.

\medskip

Recall that, in all statements, we have a fixed automaton $\Ac = \aut{\Sigma,Q,\Delta,I,F}$, a database $\Dc = (\Sigma, V, E,\src,\tgt,\lbl)$ and two vertices $s$ and $t$ of $\Dc$. Moreover, we assume that there exists at least one walk from $s$ to $t$ that matches $\Ac$, and let $\lambda$ denote the minimal length of any such walk.

\section{Proof of Lemma~\ref{lem:annotate}}

\begin{falsestatement}{Lemma~\ref{lem:annotate}} 
At the end of \functionfont{Annotate}, the following properties hold for all $p,q\in Q$, $u\in V$ and $e\in E$:
    \begin{enumerate}
        \item $L_u[p] = \min(\set{\len(w) \ | \ w\in\walks^{\leq\lambda}(\Dc), s\xrightarrow{w}u \text{ and } p\in\Delta(I,\lbl(w))})$

        \item $q\in{}B_u[p][i]$ if and only if there exists a walk $w = w'\cdot{}e$ from $s$ to $u$ such that:
            \begin{itemize}
                \item $\len(w) = L_u[p]$
                \item $\tgtidx(e) = i$
                \item $q\in\Delta(I,\lbl(w'))$
                \item $p\in\Delta(q,\lbl(e))$
            \end{itemize}

        \item $B_u[p][\tgtidx(e)]$ is of size at most $\sum_{a\in\Sigma}\card{\Delta^{-1}(a,p)}$.
    \end{enumerate}    
\end{falsestatement}

\begin{proof}
    The proof of \ref{annotatelemma:c} comes from the fact that each pair $(v,q)$ is explored at most once during \functionfont{Annotate} at line~\ref{annotate:for(v,q)}. Thus, for $e\in\outgoing(v)$ and $a\in\lbl(e)$, each element $p \in \Delta(q,a)$ adds at most one item in $B_u[p][\tgtidx(e)]$ at line~\ref{annotate:update1} or line~\ref{annotate:update2}, which yields the required bound.

    \medskip

    The proof of \ref{annotatelemma:a} and \ref{annotatelemma:b} are more involved, and require a stronger auxiliary statement.

    For $\ell >0$, let $L_u^{(\ell)}$, $B_u^{(\ell)}$ and $\text{next}^{(\ell)}$ respectively denote the contents of $L_u$, $B_u$ and $\text{next}$ at the end of the $\ell$-th step of \textsc{Annotate}, with $L_u^{(0)}$, $B_u^{(0)}$ and $\text{next}^{(0)}$ being their respective contents after the initialization, at line~\ref{annotate:main loop}.

    We prove the following properties by induction over $\ell \leq \lambda$, which immediately imply the required properties when applied with $\ell = \lambda$.

    \begin{enumerate}
        \item $L_u^{(\ell)}[p] = \min(\set{\len(w) \ | \ w\in\walks^{\leq \ell}(\Dc), s\xrightarrow{w}u \text{ and } p\in\Delta(I,\lbl(w))})$

        \item $q \in{}B_u^{(\ell)}[p][i]$ if and only if there exists a walk $w = w'\cdot e$ from $s$ to $u$ such that:
            \begin{itemize}
                \item $\len(w) = L_u^{(\ell)}[p]$
                \item $\tgtidx(e) = i$
                \item $q\in\Delta(I,\lbl(w'))$
                \item $p\in\Delta(q,\lbl(e))$
            \end{itemize}

        \item[($\star$)] $(u,p)\in\text{next}^{(\ell)}$ if and only if $L_u^{(\ell)}[p] = i$
    \end{enumerate}

    \medskip

    {\bf Initialization step:} it is immediate that the three properties hold for $\ell = 0$, since $B_u^{(0)}$ is empty for each vertex $u$ and $L_s^{(0)}$ correctly reflects the fact that the only walk of length $0$ starting
    at $s$ has an empty label and ends in $s$.

    \medskip

    {\bf Induction step:} assume that the three properties hold for some $\ell\geq 0$ and that there is an $(\ell+1)$-th step. In other words, the algorithm did not end at step $\ell$, ie. $\ell < \lambda$.
    \begin{itemize}
        \item[$\Rightarrow$:] (\ref{annotatelemma:a}) Assume that $L_u^{(\ell+1)}[p] = k$ for some $u$, $p$ and $k$. Remark that once a key-value pair is added to $L_u$, the algorithm never replaces or removes it. Thus, there are two cases:
        \begin{itemize}
            \item Case 1: at step $\ell$, we already have $L_u^{(\ell)}[p] = k$. In that case, the induction hypothesis immediately gives the desired property for $L_u^{(\ell+1)}[p] = k$.
            \item Case 2: at step $\ell$, $L_u^{(\ell)}[p]$ is undefined. In that case, it is defined during step $\ell+1$, after the test at line~\ref{annotate:inner loop start} and $k=\ell+1$. At this point in the algorithm, we know that there exist $(v,q)\in\text{next}^{(\ell)}$ and $e\in\outgoing(v)$ such that $u = \tgt(e)$ and $p\in\Delta(q,\lbl(e))$.

            Since $(v,q)\in\text{next}^{(\ell)}$, the induction hypothesis implies that $L_v[q] = \ell$. Thus there exists a walk $w$ going from $s$ to $v$ with $q\in\Delta(I,\lbl(w))$. Moreover, $w$ is of length $\ell$, which is minimal among all walks with the same endpoints that can reach state $q$ at $v$.

            Thus, $w\cdot e$ is a walk from $s$ to $u$ with $p\in\Delta(I,\lbl(w\cdot e))$. It is of length $\ell+1$, which is indeed minimal, otherwise the induction hypothesis would not allow $L_u^{(\ell)}[p]$ to be undefined.
        \end{itemize}

        \medskip

        (\ref{annotatelemma:b}) Similarly, assume that $q\in B_u^{(\ell+1)}[p][\tgtidx(e)]$ for some $u$, $p$, $q$ and $e$. Then, either it was already true at step $\ell$ and the induction hypothesis immediately gives the result, or $q$ was appended during step $\ell+1$, at line~\ref{annotate:update1} or at line~\ref{annotate:update2}. In both cases, this only happens if $L_u^{(\ell+1)}[p]$ is defined, so that the previous reasoning applies and yields the required walk.

        \medskip

        $(\star)$ Finally, assume that $(u,p)\in\text{next}^{(\ell+1)}$. Once again, this can only happen when $L_u^{(\ell+1)}[p]$ is defined, and the previous reasoning yields $L_u^{(\ell+1)}[p] = \ell+1$.

        \item[$\Leftarrow$:] Assume that there exists a walk $w \in \walks^{\leq \ell+1}(\Dc)$ such that $s \xrightarrow{w} u$ and $p\in\Delta(I,\lbl(w))$ for some vertex $u$ and state $p$. Moreover, assume that $w$ is of minimal length $k$ among such walks.

        \begin{itemize}
            \item Case 1: $k < \ell+1$. Then $w \in \walks^{\leq \ell}(\Dc)$. In that case, the induction hypothesis yields $L_u^{(\ell)}[p] = k$. Since the algorithm never removes a key-value pair in $L$, we immediately get $L_u^{(\ell+1)}[p] = k$. Similarly, if the conditions of (\ref{annotatelemma:b}) are satisfied for $w$ and some $w'$, $e$ and $p$, then the induction hypothesis yields $q\in B_u^{(\ell)}[p][\tgtidx(e)]$, from which we get $q\in B_u^{(\ell+1)}[p][\tgtidx(e)]$. Finally, there is nothing to prove for $\text{next}^{(\ell+1)}$, since $k < \ell+1$.

            \item Case 2: $k = \ell+1$. In that case, there exists $w'$, a vertex $v$ and an edge $e\in\outgoing(v)$ such that $w = s\xrightarrow{w'} v \xrightarrow{e} u$. Since $p \in \Delta(I, \lbl(w))$, there exists $q\in\Delta(I, \lbl(w'))$ such that $p\in \Delta(q, \lbl(e))$. 
            
            Remark that $w'$ is necessarily of minimal length $\ell$ among walks going from $s$ to $v$ that can reach state $q$, otherwise $w$ would not be of minimal length. Thus, the induction hypothesis yields $L_v^{(\ell)}[q] = \ell$ and $(v,q) \in \text{next}^{(\ell)}$. This means that, at step $\ell+1$ of \textsc{Annotate}, $(v,q)$ is added to $\text{current}$. It simply remains to check that $e$ and $p$ satisfy all the conditions so that $L_u^{(\ell+1)}[p] = \ell+1$, $q\in B_u^{(\ell+1)}[p][\tgtidx(e)]$ and $(u,p)$ is added to $\text{next}^{(\ell+1)}$. The only hurdle is to prove that, during this step, either $L_u[p] = \ell+1$ or $L_u[p]$ is undefined, otherwise the induction hypothesis would once again contradict the minimality of $w$. \qedhere
        \end{itemize}
    \end{itemize}
\end{proof}

\section{Proof of Lemma~\ref{lem:trim}}

\begin{falsestatement}{Lemma~\ref{lem:trim}}
    At the end of \functionfont{Trim}, the following properties hold for all $p\in Q$, 
    $u\in V$, $e,e'\in \incoming(u)$ and lists $X,X'$ over $Q$:
    \begin{enumerate}
        \item $(e,X) \in C_u[p]$ if and only if $X = B_u[p][\tgtidx(e)]$ and $X \neq \emptyset$.
        \label{trimlemma:a}
        \item $C_u[p]$ is sorted in increasing $\tgtidx(e)$ order, that is, if $(e,X)$ appears before $(e',X')$ in $C_u[p]$, then $\tgtidx(e) < \tgtidx(e')$. In particular, $e \neq e'$.
        \label{trimlemma:b}
        \item If $(e,X) \in C_u[p]$, then $X$ is of size at most $\sum_{a\in\Sigma}\card{\Delta^{-1}(a,p)}$.
        \label{trimlemma:c}
    \end{enumerate}
\end{falsestatement}

\begin{proof}
    The proof of (\ref{trimlemma:a}) immediately follows from the pseudocode of \functionfont{Trim}. Indeed, \functionfont{Trim} browses all $B_u$ exhaustively\footnote{Indeed, $B_u[p]$ ranges over $\incoming(u)$} and enqueues precisely the pairs $(e,B_u[p][\tgtidx(e)])$ for which $B_u[p][\tgtidx(e)]$ is not empty.

    Moreover, $B_u$ is explored in the same order as $\incoming(u)$ (at line~\ref{trim:for(e)}), thus pairs $(e,X)$ are enqueued in $C_u$ in the same order as $e$ appears in $\incoming(u)$. Thus, (\ref{trimlemma:b}) follows from the definition of $\tgtidx(e)$, which is precisely the position where $e$ appears in $\incoming(u)$.

    Finally, (\ref{trimlemma:c}) immediately follows from (\ref{trimlemma:a}) together with Lemma~\ref{lem:annotate}, item \ref{annotatelemma:c}.
\end{proof}

\section{Proof of Lemma~\ref{lem:c-tree}}

\begin{falsestatement}{Lemma~\ref{lem:c-tree}}
    Let~$e\cdot w$ be a node of $\Tc$ for some edge $e$ and walk $w$. Let $u = \tgt(e)$. For every~$p$, we denote by~$X_{p}$
    the unique list of states such that $(e,X_p) \in C_{u}[p]$ if such a list exists,
    or~$X_{p}=\emptyset$ otherwise.
    Then, the following holds.
    \begin{equation*}
        S(e\cdot{}w) = \bigcup_{p\in S(w)} set(X_{p})
    \end{equation*}
\end{falsestatement}

\begin{proof}
    Let $e,u,w$ be defined as in the statement of the lemma. Let $v = \src(e)$.

    \medskip

    $\subseteq:$ Let $q\in S(e\cdot{}w)$. By definition of $S$, there exists a walk $w_q$ such that $w_q\cdot{}e\cdot{}w \in \sem{\Ac}(\Dc,s,t)$, with $q\in\Delta(I,\lbl(w_q))$ and $q\in\Delta^{-1}(\lbl(e\cdot{}w),F)$.

    Remark that $w_q$ is a walk from $s$ to $v$ with $q\in\Delta(I,\lbl(w_q))$. Moreover, $w_q$ is of minimal length among such walks, otherwise $w_q\cdot{}e\cdot{}w \notin \sem{\Ac}(\Dc,s,t)$.

    We know $\Delta(q,\lbl(e)) \neq \emptyset$, otherwise we could not have $q\in\Delta^{-1}(\lbl(e\cdot{}w),F)$. Thus, let $p\in\Delta(q,\lbl(e))$. Then $w_q\cdot{}e$ is a walk from $s$ to $u$ with $p\in\Delta(I,\lbl(w_q))$, and it is of minimal length among such walks, for similar reasons.

    We can apply Lemma~\ref{lem:annotate} to $w_q\cdot{}e$. Thus, $q\in{}B_u[p][\tgtidx(e)]$.

    Then, Lemma~\ref{lem:trim} provides a set $X_p$ such that $q\in X_p$ and $(e,X_p) \in C_u[p]$.

    It remains to remark that $p\in S(w)$. Indeed, in the definition of $S$, we can choose the walk $w_p = w_q\cdot e$ as a witness.

    \medskip

    $\supseteq:$ Let $q\in X_p$ for some $p\in S(w)$. 
    
    Since $p\in S(w)$, it means that there exists a walk $s\xrightarrow{w_p}u$ that reaches $p$ at $u$, such that $w_p\cdot w\in\sem{\Ac}(\Dc,s,t)$, with $p\in\Delta^{-1}(\lbl(w),F)$. As for the direct inclusion, this implies that $w_p$ is of minimal length among such walks.
    
    Since $q\in X_p$, Lemma~\ref{lem:trim} implies that $q\in B_u[p][\tgtidx(e)]$.

    Thus, Lemma~\ref{lem:annotate} shows that there exists a walk $w_2 = w'_2\cdot{}e$ from $s$ to $u$ with $q\in\Delta(I,\lbl(w'_2))$ and $p\in\Delta(q,\lbl(e))$. Moreover, $w_2$ is of minimal length among walks that reach $p$ at $u$. Since $p\in\Delta^{-1}(\lbl(w),F)$, we deduce that $w_2\cdot w$ reaches a final state at $t$. Additionally, $\len(w_p) = \len(w_2)$ (as they both have minimal length), thus $w_2\cdot w\in \sem{\Ac}(\Dc,s,t)$.

    Then, we simply remark that $w'_2$ is a suitable witness to show that $q\in S(e\cdot w)$ in the definition of $S$.
\end{proof}

\section{Proof of Lemma~\ref{lem:enumerate-tree}}

The proof of Lemma~\ref{lem:enumerate-tree} requires an additional result:

\begin{lemma}
    Let $w_1$ be a node of $\Tc$ and $w_2$ be a strict descendant of $w_1$ such that $\src(w_1) = \src(w_2)$. Then $S(w_1) \cap S(w_2) = \emptyset$.
    \label{lem:noconcurrency}
\end{lemma}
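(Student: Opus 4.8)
The plan is to argue by contradiction: suppose $w_1$ is a node of $\Tc$, $w_2$ is a strict descendant of $w_1$ with $\src(w_1) = \src(w_2) = v$, and there is some state $q \in S(w_1) \cap S(w_2)$. First I would unpack what membership in $S$ means. From $q \in S(w_1)$ there is a walk $w_q^{(1)}$ with $w_q^{(1)} \cdot w_1 \in \sem{\Ac}(\Dc,s,t)$, $q \in \Delta(I, \lbl(w_q^{(1)}))$ and $q \in \Delta^{-1}(\lbl(w_1), F)$; in particular $w_q^{(1)}$ is a walk from $s$ to $v$, and by the minimality built into $\sem{\Ac}(\Dc,s,t)$ it must be of minimal length among walks from $s$ to $v$ that reach state $q$ — call this minimal length $m$. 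The same reasoning applied to $q \in S(w_2)$ gives a walk $w_q^{(2)}$ from $s$ to $v$ reaching $q$, with $w_q^{(2)} \cdot w_2 \in \sem{\Ac}(\Dc,s,t)$, and $w_q^{(2)}$ is also of minimal length among walks from $s$ to $v$ reaching $q$, hence $\len(w_q^{(2)}) = m$ as well.

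Next I would exploit the tree structure. Since $w_2$ is a strict descendant of $w_1$ in $\Tc$, we have $w_2 = w' \cdot w_1$ for some nonempty walk $w'$, so $\len(w_2) = \len(w') + \len(w_1) > \len(w_1)$. Now consider the walk $w_q^{(2)} \cdot w_2 = w_q^{(2)} \cdot w' \cdot w_1$. Its length is $m + \len(w') + \len(w_1)$, and it lies in $\sem{\Ac}(\Dc,s,t)$, so it has length exactly $\lambda$. On the other hand, $w_q^{(1)} \cdot w_1$ also lies in $\sem{\Ac}(\Dc,s,t)$, so it has length $\lambda = m + \len(w_1)$. Comparing the two gives $m + \len(w') + \len(w_1) = m + \len(w_1)$, hence $\len(w') = 0$, contradicting that $w_2$ is a \emph{strict} descendant of $w_1$ (so $w'$ is nonempty and has length at least $1$). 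This contradiction shows $S(w_1) \cap S(w_2) = \emptyset$.

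The only subtle point — and the step I expect to be the main obstacle — is justifying cleanly that the ``prefix'' walks witnessing membership in $S$ must have minimal length from $s$ to $v$ reaching $q$, and that this minimal length is the \emph{same} number $m$ for both $w_1$ and $w_2$. The key observation making this work is that $m$ depends only on $s$, $v$, and $q$ (it is $\min \setst{\len(w)}{s \xrightarrow{w} v,\ q \in \Delta(I,\lbl(w))}$), not on the suffix; and if the witness $w_q^{(i)}$ were not of this minimal length, we could replace it by a shorter one, producing a strictly shorter walk in $\match(\Ac,\Dc)$ from $s$ to $t$ (still reaching $F$, since $q \in \Delta^{-1}(\lbl(w_i),F)$ in both cases, because $q$ lies in $S(w_2)$ and $w_1$ is a suffix of $w_2$, so $\lbl(w_1)$-compatibility follows — one should double-check this last compatibility claim carefully, or simply use the $S(w_1)$ witness for the $w_1$ side and the $S(w_2)$ witness for the $w_2$ side independently), contradicting minimality of $\lambda$. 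Once that length bookkeeping is pinned down, the length equation above closes the argument immediately.
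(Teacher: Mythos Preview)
Your proof is correct, but it takes a slightly more roundabout route than the paper's. You introduce the minimal prefix length $m = \min\setst{\len(w)}{s \xrightarrow{w} v,\ q \in \Delta(I,\lbl(w))}$, argue that \emph{both} witnesses $w_q^{(1)}$ and $w_q^{(2)}$ must have length exactly $m$ (otherwise a shorter matching walk would exist), and then derive the numerical contradiction $\lambda = m + \len(w_1) = m + \len(w') + \len(w_1)$. The paper bypasses the minimality-of-prefix step entirely: it simply concatenates the prefix $w_{2q}$ coming from the $S(w_2)$ witness with the shorter suffix $w_1$, observes that this walk still matches $\Ac$ because $q \in \Delta(I,\lbl(w_{2q}))$ and $q \in \Delta^{-1}(\lbl(w_1),F)$ (the latter coming from the $S(w_1)$ witness), and notes that it is strictly shorter than $w_{2q}\cdot w_2 \in \sem{\Ac}(\Dc,s,t)$, which is already the contradiction. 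In effect, the paper's ``mix the two witnesses'' step is exactly the replacement argument you invoke to justify $\len(w_q^{(1)}) = m$, applied once and directly; your version applies it implicitly twice and then compares lengths. Both are valid, but the paper's avoids introducing $m$ and the hedged compatibility discussion in your last paragraph.
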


\begin{proof}
    Let $w_1$ and $w_2$ be defined as in the statement of the lemma. Since $w_2$ is a descendant of $w_1$, by definition of $\Tc$, there exists a walk $w_2'$ with $\len(w_2') \geq 1$ such that $w_2 = w_2'.w_1$
    
    By contradiction, assume that there exists $q\in S(w_1) \cap S(w_2)$. Then, we know that there exists two walks $w_{1q}$ and $w_{2q}$ such that:
    \begin{itemize}
        \item $w_{1q}\cdot w_1 \in \sem{\Ac}(\Dc,s,t)$, with $q\in \Delta(I,\lbl(w_{1q})$ and $q\in \Delta^{-1}(\lbl(w_1),F)$.
        \item $w_{2q}\cdot w_2 \in \sem{\Ac}(\Dc,s,t)$, with $q\in \Delta(I,\lbl(w_{2q})$ and $q\in \Delta^{-1}(\lbl(w_2),F)$.
    \end{itemize}

    Now, remark that $w_{2q}$ and $w_1$ concatenate. Indeed, $\tgt(w_{2q}) = \src(w_2) = \src(w_1)$. Thus, $s\xrightarrow{w_{2q}\cdot w_1}t$. Moreover, $w_{2q}\cdot w_1$ reaches a final state at $t$, because $q\in\Delta(I,\lbl(w_2))$ and $q\in\Delta^{-1}(\lbl(w_1),F)$, which means that $w_{2q}\cdot w_1$ matches $\Ac$. 
    
    However, $w_{2q}\cdot w_1$ is shorter than $w_{2q}\cdot w_2$. Indeed, $\len(w_{2q}\cdot w_2) = \len(w_{2q}\cdot w_1) + \len(w_2')$ and $\len(w_2') \geq 1$. This is a contradiction with $w_{2q}\cdot w_2\in \sem{\Ac}(\Dc,s,t)$.
\end{proof}

We are now ready to prove Lemma~\ref{lem:enumerate-tree}.

\begin{falsestatement}{Lemma~\ref{lem:enumerate-tree}}
    The tree of recursive calls to \functionfont{Enumerate} is isomorphic to $\Tc$ in the following sense: \functionfont{Enumerate}($C,\ell, w, S$) is called exactly once per node $w$ in $\Tc$. Moreover, the parameters of this call satisfy $\ell = \lambda - \len(w)$ and $S = S(w)$. 
\end{falsestatement}

\begin{proof}
    We prove the result by induction over the tree of recursive calls to \functionfont{Enumerate}.

    \textbf{Initialization step:} The first call, in \functionfont{Main}, is 
    $\functionfont{Enumerate}(C,\lambda,\walk{t},S_t)$, where $S_t = \setst{q}{L_t[q] = \lambda} \cap F$. As stated in the lemma, the walk $\walk{t}$ corresponds to the root of $\Tc$ and we have $\lambda = \lambda - \len(\walk{t})$, since $\walk{t}$ is a walk of length 0. It remains to show that $S_t = S(\walk{t})$. Indeed, we have the following equivalences:
    
    \begin{align*}
        q\in S_t &\Leftrightarrow L_t[q] \mathbin= \lambda \textrm{ and } q\in F \\
        \textrm{(Lemma~\ref{lem:annotate})}    &\Leftrightarrow \exists w_q, s\mathbin{\xrightarrow{w_q}}t, q\in \Delta(I,\lbl(w_q)) \cap F \textrm{ and } \len(w_q) \mathbin= \lambda \\
            &\Leftrightarrow \exists w_q\in\sem{\Ac}(\Dc,s,t)\textrm{ and }q\in \Delta(I,\lbl(w_q)) \cap F \\
        \textrm{(with $w_q \mathbin= w_q\cdot \walk{t})$}    &\Leftrightarrow q \in S(\walk{t})
    \end{align*}

    \textbf{Induction step:} Assume that the property holds for $\functionfont{Enumerate}(C,\ell,w,S)$ and all its ancestors in the tree of recursive calls of \functionfont{Enumerate}. We now have to show that it holds for its recursive calls.

    \begin{itemize}
        \item Case 1: $\ell = 0$. In that case, \functionfont{Enumerate} stops without making recursive calls, at line~\ref{enumerate:base}. It remains to show that $w$ has no child in $\Tc$. Indeed, since $\ell = 0$, the induction hypothesis yields $\len(w) = \lambda$ and we know from the definition of $\Tc$ that the nodes $w$ of length $\lambda$ are precisely at the leaves.

        \item Case 2: $\ell > 0$. This case requires some care, as the same structure $C$ is shared between all calls to \functionfont{Enumerate}. Hence, we first have to make sure that previous or concurrent calls will not interfere with the execution of the current call. This is stated in the following claim:

        \begin{falsestatement}{Claim} Calls to \functionfont{Enumerate} do not make concurrent access to the same data structures, in the following sense:
            \begin{itemize}
                \item At the beginning of a call to \functionfont{Enumerate}, all queues $C_u[p]$ that will be read during this call are on their starting position.
                \item If a call to \functionfont{Enumerate} reads or advances a queue $C_u[p]$, then none of its ancestors reads nor advances the same queue.
                \item At the end of a call to \functionfont{Enumerate}, all queues $C_u[p]$ that have been advanced during this call have been restarted.
            \end{itemize}
        \end{falsestatement}

        This claim (up to the current call) immediately follows from the induction hypothesis together with Lemma~\ref{lem:noconcurrency} and the fact that \functionfont{Enumerate} restarts used queues before returning, on line~\ref{enumerate:restart}.

        We can now reason about the current call $\functionfont{Enumerate}(C,\ell,w,S)$. Let $u = \src(w)$, as set at line~\ref{enumerate:beginning}. First, the induction hypothesis yields $S = S(w)$. Thus, from the claim, we deduce that, at the beginning, all queues $C_u[p]$ for $p\in S(w)$ are on their starting position. Hence, the loop at lines~\ref{enumerate:compute min start}-\ref{enumerate:compute min end} computes $e_{min}$ as the least $e$ (in $\tgtidx$ order) such that $(e,X)\in C_u[p]$ for some $X$ and $p\in S(w)$. Indeed, we know from Lemma~\ref{lem:trim} that $C_u[p]$ is sorted, hence the least $e$ can only appear in the head of the queues. Additionally, it cannot be that all queues are empty. Indeed, $w$ must have a child in $\Tc$ (otherwise $\len(w) = \lambda$ and $\ell = 0$). Thus, $S(w') \neq \emptyset$ and Lemma~\ref{lem:c-tree} ensures that at least one queue is not empty.

        Next, the loop at lines~\ref{enumerate:compute X start}-\ref{enumerate:compute X end} computes the union of all $X$ such that $(e_{min},X)\in C_u[p]$ for some $p\in S(w)$, once again thanks to the fact that $C_u[p]$ is sorted. From Lemma~\ref{lem:c-tree}, we know that this is precisely $S(e_{min} \cdot w)$. Thus, the parameters of the first recursive call at line~\ref{enumerate:enumerate} correctly correspond to the first child of $w$ in~$\Tc$.

        For the subsequent calls, simply remark that the loop only advanced the queues that had $e_{min}$ in the head. Thus, we can repeat the same reasoning when $e_{min}$ finds the second least element in $C_u[p]$ for $p\in S(w)$, and so on, until all queues are exhausted. In the end, $e_{min} = nil$, \functionfont{Enumerate} restarts all used queues and returns. \qedhere
    \end{itemize}
\end{proof}
\fi

\end{document}